\documentclass[11pt,a4paper]{article}
\usepackage{fullpage}
\usepackage{microtype}
\usepackage[OT1]{fontenc}
\usepackage[tt=false]{libertine} 
\usepackage{amsmath}
\usepackage{amssymb}
\usepackage{amsthm}
\usepackage{mathtools} 
\usepackage[linesnumbered,boxed,ruled,vlined]{algorithm2e}
\usepackage{algpseudocode}
\usepackage{enumitem}
\usepackage{bm}
\usepackage{bbm}
\usepackage{float}
\usepackage{csquotes}
\usepackage{MnSymbol}

\usepackage{textpos}

\usepackage{todonotes}

\usepackage[margin=1cm]{caption} 
\usepackage{subfig}

\usepackage[thinlines]{easytable}

\patchcmd{\thebibliography}
  {\settowidth}
  {\setlength{\itemsep}{0pt plus 0.1pt}\settowidth}
  {}{}
\apptocmd{\thebibliography}
  {\small}
  {}{}

\usepackage{pgfplots}
\pgfplotsset{compat=1.16}
\usepackage{tikz}
\usetikzlibrary{arrows,arrows.meta,backgrounds,calc,fit,decorations.pathreplacing,decorations.markings,shapes.geometric}

\tikzstyle{internal} = [draw, fill, shape=circle]
\tikzstyle{external} = [shape=circle]
\tikzstyle{square}   = [draw, fill, rectangle]
\tikzstyle{triangle} = [draw, fill, regular polygon, regular polygon sides=3, inner sep=3pt]
\tikzstyle{pentagon} = [draw, fill, regular polygon, regular polygon sides=5, inner sep=2pt, minimum size=14pt]
\tikzset{every fit/.append style=text badly centered}

\usetikzlibrary{positioning,chains,fit,shapes,calc}
\usetikzlibrary{trees}
\usetikzlibrary{decorations.pathreplacing}
\usetikzlibrary{decorations.pathmorphing}
\usetikzlibrary{decorations.markings}
\tikzset{>=latex} 

\usepackage{ifthen}

\usepackage[bookmarks=true,hypertexnames=false,pagebackref]{hyperref}
\hypersetup{colorlinks=true, citecolor=blue, linkcolor=red, urlcolor=blue}
\usepackage{cleveref}

\usepackage[normalem]{ulem}

\usepackage{mleftright}

\usepackage{cool}
\Style{DSymb={\mathrm d},DShorten=true,IntegrateDifferentialDSymb=\mathrm{d}}

\newcommand{\defeq}{\coloneqq}

\newcommand{\Ppoly}{\textnormal{\textbf{P}}/_\textnormal{{poly}}}
\newcommand{\numP}{\#{\textnormal{\textbf{P}}}}

\newcommand{\VBP}{\textnormal{\textbf{VBP}}}
\newcommand{\VP}{\textnormal{\textbf{VP}}}

\newcommand{\omegamm}{\omega_{\mathrm{MM}}}
\newcommand{\omegadet}{\omega_{\mathrm{DET}}}

\let\det\undefined
\DeclareMathOperator{\det}{det}

\DeclareMathOperator{\wt}{wt}

\DeclareMathOperator{\holant}{holant}
\DeclareMathOperator{\npm}{pm}

\DeclareMathOperator{\inv}{inv}
\DeclareMathOperator{\Fchar}{char}

\newtheorem{theorem}{Theorem}

\newtheorem{lemma}[theorem]{Lemma}

\newtheorem{proposition}[theorem]{Proposition}
\newtheorem{corollary}[theorem]{Corollary}
\theoremstyle{definition}

\newtheorem{definition}[theorem]{Definition}

\theoremstyle{remark}
\newtheorem*{remark}{Remark}

\AddToHook{env/lemma/begin}{\crefalias{theorem}{lemma}}
\AddToHook{env/claim/begin}{\crefalias{theorem}{claim}}
\AddToHook{env/observation/begin}{\crefalias{theorem}{observation}}
\AddToHook{env/proposition/begin}{\crefalias{theorem}{proposition}}
\AddToHook{env/corollary/begin}{\crefalias{theorem}{corollary}}
\AddToHook{env/condition/begin}{\crefalias{theorem}{condition}}
\AddToHook{env/definition/begin}{\crefalias{theorem}{definition}}
\AddToHook{env/question/begin}{\crefalias{theorem}{question}}
\AddToHook{env/example/begin}{\crefalias{theorem}{example}}
\AddToHook{env/conjecture/begin}{\crefalias{theorem}{conjecture}}

\crefname{theorem}{Theorem}{Theorems}
\crefname{observation}{Observation}{Observations}
\crefname{claim}{Claim}{Claims}
\crefname{condition}{Condition}{Conditions}
\crefname{algorithm}{Algorithm}{Algorithms}
\crefname{property}{Property}{Properties}
\crefname{example}{Example}{Examples}
\crefname{fact}{Fact}{Facts}
\crefname{lemma}{Lemma}{Lemmas}
\crefname{corollary}{Corollary}{Corollaries}
\crefname{definition}{Definition}{Definitions}
\crefname{remark}{Remark}{Remarks}
\crefname{proposition}{Proposition}{Propositions}
\crefname{equation}{equation}{equations}
\crefname{enumi}{Case}{Case}
\creflabelformat{enumi}{(#2#1#3)}

\newcommand{\probname}[1]{\textnormal{\textsc{#1}}}

\newcommand{\prob}[3]{
\begin{center}
\noindent\parbox[t]{0.12\linewidth}{\raggedleft \textit{Name:}}\hfill\begin{minipage}[t]{0.83\linewidth}{#1}\end{minipage}

\medskip

\noindent\parbox[t]{0.12\linewidth}{\raggedleft \textit{Instance:}}\hfill\begin{minipage}[t]{0.83\linewidth}{#2}\end{minipage}

\medskip

\noindent\parbox[t]{0.12\linewidth}{\raggedleft \textit{Output:}}\hfill\begin{minipage}[t]{0.83\linewidth}
{#3}\end{minipage}
\end{center}
}

\makeatletter
\providecommand\@dotsep{5}
\def\listtodoname{Todo list}
\def\listoftodos{\@starttoc{tdo}\listtodoname}
\makeatother

\usetikzlibrary{shapes}
\definecolor{cbfp1}{RGB}{120,94,240}
\definecolor{cbfp2}{RGB}{220,38,127}
\definecolor{cbfp3}{RGB}{254,97,0}
\definecolor{cbfp4}{RGB}{255,176,0}
\definecolor{cbfp5}{RGB}{100,143,255}
\definecolor{verylightgray}{RGB}{216,216,216}

\newcommand{\gtdfv}{
\node [shape=diamond,fill=black,scale=0.3] (v) at (0,0) {};
\node [scale=0.2] (v1) at (0,1) {};
\node [scale=0.2] (v2) at (1,0) {};
\node [scale=0.2] (v3) at (0,-1) {};
\node [scale=0.2] (v4) at (-1,0) {};
}
\newcommand{\gtE}{\tikz[baseline=-.5*(height("$+$")-depth("$+$")),xscale=0.25,yscale=0.25]{\gtdfv
\path [draw=black!5,thick] (v) -- (v1) (v) -- (v2) (v) -- (v3) (v) -- (v4);
}}
\newcommand{\gtV}{\tikz[baseline=-.5*(height("$+$")-depth("$+$")),xscale=0.25,yscale=0.25]{\gtdfv
\path [draw=cbfp3,ultra thick] (v) -- (v1) (v) -- (v3);
\path [draw=black!5,thick] (v) -- (v2) (v) -- (v4);
}}
\newcommand{\gtH}{\tikz[baseline=-.5*(height("$+$")-depth("$+$")),xscale=0.25,yscale=0.25]{\gtdfv
\path [draw=black!5,thick] (v) -- (v1) (v) -- (v3);
\path [draw=cbfp3,ultra thick] (v) -- (v2) (v) -- (v4);
}}
\newcommand{\gtC}{\tikz[baseline=-.5*(height("$+$")-depth("$+$")),xscale=0.25,yscale=0.25]{\gtdfv
\path [draw=cbfp3,ultra thick] (v) -- (v1) (v) -- (v2) (v) -- (v3) (v) -- (v4);
}}
\newcommand{\gtF}{\tikz[baseline=-.5*(height("$+$")-depth("$+$")),xscale=0.25,yscale=0.25]{\gtdfv
\path [draw=black!5,thick] (v) -- (v1) (v) -- (v4);
\path [draw=cbfp3,ultra thick] (v) -- (v2) (v) -- (v3);
}}
\newcommand{\gtI}{\tikz[baseline=-.5*(height("$+$")-depth("$+$")),xscale=0.25,yscale=0.25]{\gtdfv
\path [draw=black!5,thick] (v) -- (v2) (v) -- (v3);
\path [draw=cbfp3,ultra thick] (v) -- (v1) (v) -- (v4);
}}
\newcommand{\gtITR}{\tikz[baseline=-.5*(height("$+$")-depth("$+$")),xscale=0.25,yscale=0.25]{\gtdfv
\path [draw=black!5,thick] (v) -- (v4) (v) -- (v3);
\path [draw=cbfp3,ultra thick] (v) -- (v1) (v) -- (v2);
}}
\newcommand{\gtIBL}{\tikz[baseline=-.5*(height("$+$")-depth("$+$")),xscale=0.25,yscale=0.25]{\gtdfv
\path [draw=black!5,thick] (v) -- (v2) (v) -- (v1);
\path [draw=cbfp3,ultra thick] (v) -- (v3) (v) -- (v4);
}}

\title{Planar Perfect Matching Counting is as Hard as Determinants}
\date{}

\author{Radu Curticapean\thanks{University of Regensburg and IT University of Copenhagen. Funded by the
European Union (ERC, CountHom, 101077083). Views and opinions expressed are however those of the author(s) only and do not necessarily reflect those of the European Union or the European Research Council Executive Agency.}
\and Jiaheng Wang\thanks{University of Regensburg and University of Helsinki. Supported by the ERC project CountHom and the Helsinki Institute of Information Technology (HIIT).}}

\begin{document}

\maketitle

\begin{abstract}
In the 1960s, Fisher, Kasteleyn and Temperley designed an ingenious algorithm for computing the partition function of the \emph{dimer model}, or equivalently, for counting perfect matchings in edge-weighted planar graphs (\textit{Philos.~Mag.} 1961; \textit{J.~Mathematical~Phys.} 1963). 
This \emph{FKT algorithm} later became the foundation for Valiant's \emph{holographic algorithms} (FOCS 2004; \textit{SIAM~J.~Comput.} 2008), which motivated the study of counting problems under the Holant framework.
Combined with an algorithm by Yuster (FOCS 2008), the FKT algorithm allows us to count edge-weighted perfect matchings in planar $n$-vertex graphs with $\tilde{O}(n^{\omega/2})$ arithmetic operations, where $\omega<2.372$ is the matrix multiplication exponent. 

We prove a corresponding lower bound: Over algebraic circuits and other sufficiently strong computational models, perfect matchings in edge-weighted $n$-vertex planar graphs $G$ cannot be counted in $O(n^{\omega/2-\epsilon})$ arithmetic operations. This confirms the optimality of Yuster's algorithm.
Our bound holds even when $G$ is an edge-weighted square grid.  
\end{abstract}

\section{Introduction}

The complexity class \numP{} introduced by Valiant \cite{DBLP:journals/siamcomp/Valiant79} shows a fundamental separation between decision and counting problems, most prominently for the \emph{perfect matching} problem: 
while the existence of perfect matchings in graphs can be determined in polynomial time~\cite{MR177907}, counting them is \numP{}-complete~\cite{DBLP:journals/siamcomp/Valiant79}.
Given this intractability, the Fisher--Kasteleyn--Temperley (FKT) algorithm \cite{MR153427,MR136398} stands as a remarkable anomaly, as it allows us to count perfect matchings in \emph{planar} graphs $G$ in polynomial time.
The FKT algorithm achieves this by introducing carefully chosen signs into the adjacency matrix of $G$ and using an ingenious cancellation property to ensure that the determinant of the resulting matrix counts (pairs of) perfect matchings in $G$.
The FKT method also works when the input graph $G$ is edge-weighted and matchings are weighted by the product of involved edge-weights; we denote this computational problem as \probname{\#PlanarPM}.

Beyond its implications in statistical physics such as \emph{Exactly Solved Models}~\cite{Baxter82,Welsh_1993}, the FKT algorithm is the foundation of Valiant's \emph{holographic algorithms}, which are polynomial-time algorithms for problems that appear to evade classical techniques in algorithm design~\cite{ DBLP:conf/focs/Valiant06,DBLP:journals/siamcomp/Valiant08,DBLP:journals/mst/CaiFGW22,DBLP:journals/siamcomp/Backens21}.
These algorithms gave birth to the ongoing project of classifying the complexity of counting problems under the \emph{Holant} framework, which led to a plethora of research papers, monographs and a textbook over the last two decades, e.g., \cite{DBLP:journals/jcss/CaiL11,DBLP:journals/cc/HuangL16,DBLP:conf/focs/0001C20,DBLP:journals/mst/CaiLX20,DBLP:journals/mst/CaiFGW22,CaiChenBook}. 
In most of these results, the planar cases are solved by an algorithmic template that ultimately involves the FKT algorithm.
Indeed, every tractable planar \emph{\#CSP} can be solved along these lines~\cite{DBLP:journals/siamcomp/CaiF22};
this was discovered while studying to which extent the FKT algorithm is \emph{universal} for counting problems on planar graphs. (It is not universal for \emph{Holant problems}~\cite{DBLP:journals/mst/CaiFGW22}.)
In this paper, we study a different question---

\subsection*{Is the FKT algorithm optimal?}

Given the fundamental role of the problem \probname{\#PlanarPM} in statistical physics and theoretical computer science, e.g., in the form of Holant problems, it is important to determine its precise complexity. 
The FKT algorithm readily gives an upper bound:
On an $n$-vertex planar graph, the running time of this algorithm is dominated by the evaluation of a particular $n\times n$ determinant $\det(A)$. There are several options for computing $\det(A)$: 
\begin{itemize}
    \item Algorithms for generic determinants allow us to evaluate $\det(A)$ with $\tilde{O}(n^{\omega})$ arithmetic operations, where $\omega <2.372$ is the \emph{matrix multiplication exponent}.\footnote{
To be more precise, but a bit technical: We can further replace the matrix multiplication constant with the\emph{matrix determinant constant} $\omegadet$ since the algorithm only asks for the determinant. 
The relation between $\omegadet$ and $\omegamm$ is discussed later at the end of the introduction. 
}
    \item However, the matrix $A$ inherits a great deal of structure from the planarity of the input graph $G$. This can be exploited, e.g., using the \emph{nested dissection method}~\cite{MR388756,MR526496} and additional insights~\cite{DBLP:conf/soda/Wilson97,DBLP:journals/algorithmica/MuchaS06,DBLP:conf/soda/YusterZ07,DBLP:conf/focs/Yuster08}, to evaluate $\det(A)$ in $\tilde{O}(n^{\omega/2})$ arithmetic operations.
\end{itemize}

The exponent $\omegamm/2$ is a natural barrier for the nested dissection method, since it relies on planar separators, which may require $O(\sqrt{n})$ vertices in the worst case.
Consequently, any attempt at further improving the running time would likely need to bypass the separator-based approach.

Note however that this algorithmic strategy might already be optimal, and we merely do not know yet: Our current knowledge permits the possibility of $\omegamm =2$, which would imply an optimal exponent of $\omegamm/2 = 1$.
Conversely, an $\Omega(n^{1+\epsilon})$ lower bound on \probname{\#PlanarPM} would directly translate into $\omegamm >2$. Such a lower bound however appears to be out of reach, even conditioned on the strong exponential-time hypothesis \cite{ImpagliazzoP01}, the APSP conjecture \cite{Williams18}, or other popular fine-grained conjectures \cite{WilliamsW18}. 

\subsection*{Our results}
Although the current fine-grained complexity landscape offers little explanation for the $\omega/2$ barrier for the problem \probname{\#PlanarPM}, we can still obtain a meaningful statement about its complexity by relating it to \emph{generic} determinants: 
By showing that generic $n \times n$ determinants can be reduced to \probname{\#PlanarPM} on grid graphs of side-length $O(n)$, we prove the following lower bound:

\begin{theorem} \label{thm:main}
No algorithm solves \probname{\#PlanarPM} on planar $n$-vertex edge-weighted graphs with $O(n^{\omega/2-\epsilon})$ arithmetic operations for any $\epsilon>0$, even when the input graph is an edge-weighted grid. 
Here, $\omega=\omegadet$ is the matrix determinant constant.
\end{theorem}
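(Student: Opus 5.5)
The plan is to reduce generic $n\times n$ determinants to \probname{\#PlanarPM} on an edge-weighted grid with $O(n^2)$ vertices, using only $O(n^2)$ additional arithmetic operations. Such a reduction gives the theorem at once. Suppose \probname{\#PlanarPM} on $N$-vertex grids could be solved with $O(N^{\omega/2-\epsilon})$ operations. Then $\det(A)$ could be computed with $O(n^{2(\omega/2-\epsilon)}) + O(n^2) = O(n^{\omega-2\epsilon})$ operations; here we use $\omega\ge 2$, so the $O(n^2)$ overhead is absorbed. This contradicts the definition of $\omegadet$ as the infimum of achievable exponents for determinants.

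The reduction goes through the Lindström--Gessel--Viennot lemma and a planar encoding of permutations.

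\textbf{Step 1.} Write $\det(A)=\sum_{\sigma}\operatorname{sgn}(\sigma)\prod_i a_{i,\sigma(i)}$. Realise this sum as a signed sum over systems of vertex-disjoint paths in a planar DAG. Take an $n\times n$ crossbar routing pattern in which source $i$ is sent to sink $\sigma(i)$, and let each path pass through a weighted edge carrying $a_{i,\sigma(i)}$.

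\textbf{Step 2.} Paths in a planar layout must cross. We handle this with a crossing gadget of constant size, acting on a $2\times2$ or $3\times3$ grid block. Viewed as a matchgate, its signature should let two strands either cross (with a factor $-1$, which produces the permutation sign through the inversion count) or avoid each other, and it should kill every other boundary pattern. The underlying principle is the standard one: a planar matchgate realises the ``crossover with sign'' signature. Equivalently, we may use the FKT/Kasteleyn correspondence in the reverse direction, in which the sign of a crossing is the sign of a transposition.

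\textbf{Step 3.} Turn disjoint path systems into perfect matchings. Subdivide the paths so that a vertex lying on a path is matched along the path (alternating edges), and a vertex not on a path is matched to a private partner or inside a local unit. With weights $\pm1$ and the entries $a_{ij}$, the weighted perfect-matching count then equals $\det(A)$, up to a global sign or constant that we can compute in $O(1)$.

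\textbf{Step 4.} Embed the whole $O(n)\times O(n)$ layout of gadgets into a square grid of side $O(n)$. Edges not in the gadgets receive weight $0$, and each gadget cell is padded so that the grid has an even number of vertices and the padding is forced uniquely.

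I expect Step 2, together with its interaction with Step 3, to be the main obstacle. The gadget must be drawn inside the grid, and we must verify that no spurious matchings contribute, i.e.\ that paths cannot turn around or merge inside a gadget. This forces the matchgate signature to be exactly the signed crossing. My first attempt would be to search over $\pm1/0$ weightings of a small grid block, imposing the matchgate identities (Grassmann--Plücker), and to exploit the fact that the sign $-1$ on a ``both-strands-cross'' pattern is consistent with planarity precisely because matchgate signatures are Pfaffians. If that fails, I would instead encode each crossing via the $\det$-as-planar-Pfaffian construction of a skew-symmetric matrix $\begin{pmatrix}0&A\\-A^T&0\end{pmatrix}$ and use a planar Pfaffian gadget to uncross it.
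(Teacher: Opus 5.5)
Your first paragraph is, up to a detail noted below, how the paper derives this theorem from \Cref{lem:reduction-grid}, and a crossbar with $-1$ on crossings is the right picture. The gap is that the gadget, which is the whole content of the reduction, is neither constructed nor correctly specified.

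In a crossbar, the path from source $i$ turns at the crosspoint $(i,\sigma(i))$. So the weight $a_{i,\sigma(i)}$ must be a state of the \emph{same} four-ary gadget that handles crossings. A crossing gadget that ``kills every other boundary pattern'' kills the turn, and your Step~1 never says where else the edge carrying $a_{i,\sigma(i)}$ could sit.

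Concretely, one wants a single planar matchgate whose value on $(t,r,b,\ell)$ is:
$1$ on $0000,1010,0101$;
$-1$ on $1111$;
$a_{ij}$ on one turn pattern such as $0110$;
and $0$ on all other patterns, including the opposite turn $1001$.

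The standard fact that the signed crossover is a matchgate does not give this. You must check that adding the turn preserves the arity-four matchgate identity $f_{0000}f_{1111}-f_{1100}f_{0011}+f_{1010}f_{0101}-f_{1001}f_{0110}=0$. It does, because $0110$ is paired with $1001$, which must vanish anyway. You must then realize the signature inside a grid block. The paper does this with the explicit six-vertex gadget of \Cref{lem:gadget-weight}: a $6$-cycle with a chord of weight $-1$ and one edge of weight $1-x$. There $f_{1001}=0$ comes from two matchings of weights $-1$ and $1$ cancelling. Deferring this to a search with a fallback plan leaves the central step unproved.

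Three further corrections.

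\emph{Cross versus avoid.} A matchgate sees only which dangling edges are matched. So ``the strands cross'' and ``the strands avoid each other'' are the same pattern $1111$ and cannot be weighted differently. The LGV and path-subdivision bookkeeping of Steps~1 and~3 is therefore not meaningful here. It is also unnecessary, since a beam is just a run of inter-gadget edges in the matching.

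\emph{The missing counting argument.} What replaces LGV is an argument over boundary assignments, which you would still have to supply. The sentinels force each row and column to start at $0$ and end at $1$, and the five allowed states let it switch exactly once, at a turn. Hence nonvanishing assignments biject with permutations. For $i<j$, the cell $(j,\pi i)$ is in state $1111$ iff $\pi i>\pi j$, so crossings are exactly inversions (\Cref{prop:flip}, \Cref{cor:bijection}, \Cref{lem:inversions}).

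\emph{The complexity wrap-up.} The $O(n^2)$ overhead is not absorbed merely because $\omega\ge 2$. The total cost is $O(n^{\max(2,\omega-2\epsilon)})$, which contradicts the definition of $\omega$ only when $\omega>2$. The case $\omega=2$ is trivial and must be set aside separately, as the paper does.
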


Consequently, over $\mathbb{R}$, the $\tilde{O}(n^{\omegamm/2})$ algorithms by Yuster and Zwick \cite{DBLP:conf/soda/YusterZ07,DBLP:conf/focs/Yuster08} are asymptotically optimal (up to polylogarithmic factors) in computational models with $\omegadet=\omegamm$. This includes standard algebraic models that support the Baur--Strassen Theorem~\cite{DBLP:journals/tcs/BaurS83}, e.g., algebraic circuits and algebraic branching programs.

Regarding the role of edge-weights, we note that the graphs constructed in the proof of \cref{thm:main} contain edges of negative weight. It is possible to remove such edges, but the running time overhead caused in the reduction would lead to a weaker bound.
Over prime fields $\mathbb{F}_p$, a modified stretching/thickening argument \cite{MR1049758} can be implemented with only a constant-factor overhead, yielding the following theorem:

\begin{theorem} \label{thm:main-modulo}
For any fixed prime $p\geq 2$ and $\epsilon>0$, no algorithm solves \probname{\#PlanarPM} over $\mathbb{F}_p$ on unweighted planar $n$-vertex graphs with $O(n^{\omega/2-\epsilon})$ arithmetic operations.
Here, $\omega=\omegadet$ is the matrix determinant constant.
\end{theorem}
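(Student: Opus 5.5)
Theorem~\ref{thm:main-modulo} will follow from Theorem~\ref{thm:main}. Concretely, we show that a generic $n\times n$ determinant over $\mathbb{F}_p$ reduces to unweighted \probname{\#PlanarPM} modulo $p$ on graphs with $O(n^2)$ vertices, with $O(n^2)$ overhead. Then an $O(N^{\omega/2-\epsilon})$ algorithm on $N=O(n^2)$ vertices would compute determinants in $O(n^{\omega-2\epsilon})$ operations, contradicting the definition of $\omegadet$.

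First, we take the reduction underlying Theorem~\ref{thm:main}: from a matrix $A$ it builds an edge-weighted grid $G_A$ with $O(n^2)$ vertices and $\npm(G_A)=\det(A)$, up to an easily computed factor. Over $\mathbb{F}_p$ every weight is a residue in $\{0,\dots,p-1\}$. Negative weights disappear because $-1\equiv p-1$, and weight $0$ means deleting the edge. So it suffices to simulate each edge $uv$ of fixed integer weight $c\in\{1,\dots,p-1\}$ by an unweighted planar gadget of constant size.

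Second, we build that gadget. The standard construction replaces $uv$ by $c$ parallel paths of length $3$, i.e.\ $u-a_i-b_i-v$. For each $i$, the internal vertices $a_i,b_i$ are matched either to each other or to $u$ and $v$ respectively. If the internal pairs were matched independently, we would get the wrong count. The fix is to route the parallel paths so that, in every perfect matching, either $u$ and $v$ are both left to be covered through the gadget or neither is. In that case the gadget's contribution is $c$ if $uv$ is ``used'' and $1$ otherwise.

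More carefully, we will use the ``thickening'' idea: replace $uv$ by a path $u-x-y-v$ and attach $c$ disjoint parallel edges between $x$ and $y$. Parallel edges are not allowed in a simple graph, so we subdivide each of them with two new vertices $s_j,t_j$ to get the path $x-s_j-t_j-y$. The gadget must then be checked so that unused branches contribute exactly weight $1$. The branches $s_j t_j$ matched to each other give this.

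Third, we check the counting. There are two states.
\begin{itemize}
\item Edge used ($u$ matched to $x$, $y$ matched to $v$): all $s_jt_j$ pairs match internally. This gives $1$ configuration.
\item Edge unused ($x$ and $y$ must be matched inside the gadget): either $x$ is matched to $y$ directly, or $x$ is matched to some $s_j$ and $y$ to $t_j$, with the other pairs matched internally. This gives $1+c$ configurations.
\end{itemize}
So the gadget acts as weight ratio $1:(1+c)$. We therefore choose $c'\equiv c^{-1}-1 \pmod p$ branches and multiply the total count by the product of the factors $(1+c')$, which is invertible mod $p$. If $c^{-1}\equiv 1$, we use no branches and keep the plain edge $x$--$y$. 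The case $c=0$ is excluded since that edge is deleted. The gadget is planar because the branches sit in the face between $x$ and $y$, and it has constant size since $p$ is fixed. Hence $G'$ has $O(n^2)$ vertices, and it is built and the factor divided out in $O(n^2)$ operations.

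The main obstacle is verifying that each gadget's contribution factorizes cleanly, so that no cross-terms arise between edges. We also need the correction factor to be nonzero mod $p$, which holds since $1+c'\equiv c^{-1}$. Once this is settled, the bound follows by comparing exponents as above.
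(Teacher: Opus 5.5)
Your proposal is correct and follows essentially the paper's route: you apply the grid reduction, then replace each weighted edge by a constant-size unweighted planar stretching/thickening gadget, so that the count of the new graph is a known nonzero multiple of $\det(A)$ mod $p$. Your gadget subdivides $uv$ into $u$--$x$--$y$--$v$ and thickens $xy$ with $w^{-1}-1$ subdivided parallel paths (global factor $\prod_e w(e)^{-1}$), whereas the paper's $M_k$ keeps the edge $u^*v^*$ and adds a detour through $k=(w-1)^{-1}$ parallel length-$3$ paths (factor $\prod_e (w(e)-1)^{-1}$); this difference is only cosmetic. The one step you leave implicit is that the mixed state, with exactly one of $ux, yv$ in the matching, contributes $0$; this follows from parity, since the gadget has an even number $2+2c'$ of internal vertices, and it is exactly what makes the per-edge contributions factorize.
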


\subsection*{Connections to Algebraic Complexity Theory}
Determinants are complete for the algebraic complexity class $\VBP$, which captures computations by polynomial-sized \emph{algebraic branching programs} or \emph{skew} arithmetic circuits.\footnote{
A skew circuit involves at least one input gate at each multiplication gate. 
The related \emph{weakly-skew} circuits and algebraic branching programs turn out to have \emph{exactly} the same computational power \cite[Section 2]{DBLP:conf/issac/KaltofenK08}; see also \cite[Remark 2.22]{DBLP:journals/corr/abs-2406-06217}.}
Dropping skewness yields the class $\VP$, the algebraic version of $\Ppoly$ \cite{DBLP:conf/stoc/Valiant79a}. 
It is conjectured that $\VBP\neq\VP$.

It is known that determinants capture skew circuits and $\VBP$ with low overhead.
In the following, we write $\det_m$ for the determinant of an $m\times m$ matrix with generic indeterminates $x_{ij}$; this is a multivariate polynomial of degree $m$. The following is known:
\begin{itemize}
\item The determinant $\det_m$ admits $O(m^4)$-size skew circuits, e.g., by Berkowitz's method \cite{DBLP:journals/ipl/Berkowitz84,MR7497} or the beautiful combinatorial approach of Mahajan and Vinay \cite{DBLP:journals/cjtcs/MahajanV97}.  
\item Every function with a skew circuit of size $m$ is a projection of $\det_{m+1}$ \cite{toda1992classes,DBLP:journals/jc/MalodP08}.
\end{itemize}

Since the FKT algorithm establishes a reduction from \probname{\#PlanarPM} to determinants, one may wonder whether the converse reduction is also possible.
This seems plausible \emph{a priori}, because the determinant $\det(A)$ arising in the FKT algorithm is, up to signs, the adjacency matrix of an arbitrary edge-weighted planar graph $G$.
Indeed, Flarup, Koiran and Lyaudet \cite{DBLP:conf/isaac/FlarupKL07} already established 20 years ago that every polynomial $p$ with size-$n$ skew circuits is a projection of \probname{\#PlanarPM} with $O(n^2)$-vertex planar graphs. In this projection, edge-weights are univariate linear functions in the variables of $p$. 
Composing this result with the $\VBP$-completeness reduction of the determinant, one can express $\det_m$ as \probname{\#PlanarPM} on a planar graph on $O(m^8)$ vertices. 
Thus, up to polynomial factors, the determinant and \probname{\#PlanarPM} express the same polynomials through projections.

To study the \emph{fine-grained complexity} of \probname{\#PlanarPM} however, the $O(m^8)$ blowup is prohibitive.
Towards our main theorem, we establish a reduction with the \emph{optimal} blowup of $O(m^2)$.
This bound is indeed optimal for the trivial reason that $n$-vertex planar graphs have $O(n)$ edges and therefore require $\Omega(m^2)$ vertices to capture the $m^2$ variables in $\det_m$ by distinct edges.

\begin{theorem}[Optimal Expressiveness] \label{lem:reduction-grid}
Let $m\in\mathbb{N}$ and let $\bm{X}=(x_{ij})$ be an $m\times m$ matrix with indeterminates.
Then there is an edge-weighted $3m\times 5m$ grid graph $G_{\bm{X}}$ with edge weights $\{-1,0,1\}\cup\{1-x_{ij}: i,j\in[m]\}$ such that, as polynomials, 
\[
\det(\bm{X})\equiv\npm(G_{\bm{X}}). 
\]
\end{theorem}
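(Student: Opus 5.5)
Our plan is to realize $\det(\bm{X})$ as a signed sum over cycle covers, i.e.\ over permutations $\sigma$, and to encode each permutation as a family of vertex-disjoint, non-crossing lattice paths in the grid. Write $\bm{X}=I-(I-\bm{X})$ and put $\bm{Y}=I-\bm{X}$, whose off-diagonal entries are $-x_{ij}$ and whose diagonal entries are $1-x_{ii}$. The appearance of the weights $1-x_{ij}$ suggests a cleaner route: use the Lindström--Gessel--Viennot lemma in its planar form. We build a planar acyclic network with sources $s_1,\dots,s_m$ on the left boundary and sinks $t_1,\dots,t_m$ on the right boundary. In this network the total path weight from $s_i$ to $t_j$ equals $x_{ij}$, up to sign corrections. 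LGV then expresses $\det(\bm{X})$ as the signed sum over non-intersecting path systems.

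For planar networks, non-intersecting systems only realize the identity permutation, so the crossings must be simulated. We do this with a ``crossing gadget''. We route the $m$ strands through a sorting network of adjacent transpositions, arranged as a brick-like pattern of $O(m)$ columns of $2\times 2$ crossing cells. At each cell the gadget lets two paths either pass straight or swap, with an edge of weight $-1$ supplying the sign change of the transposition. Standard LGV-with-crossings arguments, in the style of Valiant's matchgates for planarizing crossings, give the needed cancellations.

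The matrix entries are injected in a first stage. Along row $i$ there are $m$ cells, and at cell $j$ a path may ``exit'' with weight $x_{ij}$. Writing $x_{ij}=1-(1-x_{ij})$ explains the weights: one alternative has weight $1$ and the other has weight $1-x_{ij}$, combined with a $-1$ edge. The total size is $3m$ rows by $5m$ columns, with constant-width gadgets per index pair.

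Once the weighted non-intersecting path systems in a grid-embedded acyclic network are in hand, we convert them to perfect matchings. Each path is doubled (vertex splitting), so that a vertex not on any path is matched by its internal edge and a vertex on a path is matched along the path. This is the standard bijection between vertex-disjoint path systems and perfect matchings of the split graph. The construction must be done directly inside the grid: we place the gadgets so that the grid's bipartite structure gives each vertex a ``rest'' partner, and we set unused edges to weight $0$. Grid edges of weight $0$ act as deletions, so any planar subgraph of the grid with these local gadgets is allowed.

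The main obstacle is sign control. Perfect matchings carry no signs, while the determinant requires $\operatorname{sgn}(\sigma)$. We must therefore verify that each path system realizing $\sigma$ acquires exactly $\operatorname{sgn}(\sigma)$ from the $-1$ edges in the crossing cells. We also need the crossing cells to be unambiguous: exactly one local matching configuration per crossing or non-crossing state, or else cancelling pairs. We would first try to establish this by a local check of each constant-size cell, enumerating its boundary conditions. Then we would argue that the global weight factors as a product of local contributions, giving $\npm(G_{\bm{X}})=\sum_\sigma \operatorname{sgn}(\sigma)\prod_i x_{i\sigma(i)}$.
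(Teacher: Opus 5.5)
There is a genuine gap: you never fix a construction, and the mechanism you sketch would not produce $\det(\bm{X})$.

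\textbf{The crossing cells have the wrong semantics.} Suppose each $2\times 2$ cell of a brick (odd--even transposition) network may \emph{choose} between passing straight and swapping, with $-1$ on a swap. Then a permutation is realized by many swap patterns. For instance, swapping strands $1,2$ in layers $1$ and $3$ and doing nothing in layer $2$ again gives the identity. All realizations of $\sigma$ carry the same sign $\operatorname{sgn}(\sigma)$, so nothing cancels. The coefficient of $\prod_i x_{i\sigma(i)}$ becomes $\operatorname{sgn}(\sigma)N_\sigma$ with multiplicities $N_\sigma>1$. No network of $c$ binary cells can have $N_\sigma=1$ for all $\sigma$, since $2^{c}\neq m!$ for $m\geq 3$. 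Your injection stage also never ties the factor $x_{ij}$ to strand $i$ actually arriving at sink $j$.

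What you need instead is a \emph{forced} crossing, contributing $-1$ exactly when both strands are present. Each strand must also turn exactly once, at the cell carrying its matrix entry. If the exit at cell $(i,j)$ sends the strand straight along column $j$ to the border, crossing every row strand it meets, no sorting network is needed. This is precisely the paper's construction. Strand $i$ occupies row $i$ to the right of its turn at $(i,\pi i)$ and column $\pi i$ below it. Every cell is then empty, horizontal, vertical, crossing, or a turn of weight $x_{ij}$. The map $(i,j)\mapsto(j,\pi i)$ sends the inversions of $\pi$ bijectively onto the crossing cells, so the sign $(-1)^{\inv(\pi)}$ comes out directly. LGV plays no role: once crossings are simulated, it reduces to the Leibniz formula for arcs $s_i\to t_j$ of weight $x_{ij}$.

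\textbf{The central ingredient is missing.} You need a constant-size planar graph with four dangling edges whose weighted perfect-matching count is $1$ on the empty, horizontal and vertical states, $-1$ on the crossing, $x_{ij}$ on the turn, and $0$ on every other boundary pattern, including the wrong-way turns. Announcing a ``local check'' and appealing to standard matchgate arguments defers the entire proof.

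Existence of such a gadget is not automatic. Index the signature in the cyclic order $t,r,b,\ell$. With $+1$ on the crossing, the matchgate identity $f_{0000}f_{1111}-f_{1100}f_{0011}+f_{1010}f_{0101}-f_{1001}f_{0110}=0$ evaluates to $2$. Hence no planar gadget realizes the plain crossover.

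The paper's $6$-vertex gadget $H$ works for two concrete reasons:
\begin{itemize}
\item on the turn state $0110$, its two matchings have total weight $1+(-1)(1-x)=x$;
\item on the wrong-way turn $1001$, its two matchings have weights $-1$ and $+1$ and cancel.
\end{itemize}
Your vertex-splitting step cannot replace this gadget. At a crossing cell two paths share a location and are not vertex-disjoint.

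Finally, the $3m\times 5m$ bound is only asserted. The paper derives it from a skewed placement of the $6$-vertex gadgets in the grid.
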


We remark that the negative edge-weights introduced in \cref{lem:reduction-grid} are necessary, as the determinant can be negative even on $0$-$1$ matrices, while the plain \emph{unweighted} count of perfect matchings in a graph is of course always nonnegative.

\subsection*{Proof Overview}
Our proof is fully elementary and self-contained. In the proof, we use a minimal subset of the vast theory of \emph{Holant problems}, which we outline below. In machine learning and physics, such problems are also known as (contractions of) \emph{tensor networks} \cite{CLSWW2024,Huggins_2019,PhysRevLett.115.180405}.

\paragraph{Holant problems.}
Holant problems count weighted Boolean assignments to the edges of a graph. The weight of an assignment is determined by local factors contributed by the vertices.
More formally, consider a graph $G=(V,E)$ equipped with a set of functions $\mathcal{F}=\{f_v\}$, so-called \emph{signatures}, such that $f_v:\{0,1\}^{\deg(v)}\to\mathbb{C}$ on vertex $v$ obtains as input an assignment to the edges incident with $v$.
Any $0$-$1$ edge assignment $\sigma$ gives rise to a weight, which is the product of all $f_v$ under $\sigma$. 
The \emph{Holant} on $(G;\mathcal{F})$ is the sum of these weights over all possible $\sigma$: 
\[
\holant(G;\mathcal{F})=\sum_{\sigma:E\to\{0,1\}}\prod_{v\in V}f_v(\sigma|_{E(v)}).
\]

As an example, the problem of counting perfect matchings in a graph $G$ can be viewed as a Holant problem: Associate to each vertex $v\in V(G)$ the signature $f_v: \{0,1\}^{\deg(v)} \to \{0,1\}$ with 
\[
f_v(a)=1 \quad \text{if and only if the Hamming weight of $a$ is $1$}.
\]

Counting perfect matchings with weights can also be expressed as a Holant problem: It suffices to subdivide edges once and place signatures on the subdivision vertices that capture their edge-weight.\footnote{On the subdivision vertex $s$ induced by an edge of weight $w$, place $f(00)=1$ and $f(01)=f(10)=0$ and $f(11)=w$.}
When $G$ is bipartite, counting weighted perfect matchings in $G$ amounts to evaluating the \emph{permanent} of the bi-adjacency matrix of $G$.

\begin{figure}[t]
\centering
\begin{tikzpicture}
\clip (-3,-2.62) rectangle (3,2.62);
\draw [black,opacity=0] (-3,-2.62) rectangle (3,2.62);

\begin{scope}[shift={(-2.5,-2.5)}]

\foreach \x in {1,...,4}
\foreach \y in {1,...,4} {
    \node [diamond, draw=black, inner sep=2pt] (g\x\y) at ({\x},{\y}) {};
}

\foreach \x in {1,...,4} {
    \node [rectangle, draw=black, inner sep=2.828pt] (g\x 0) at ({\x},{0}) {};
    \node [circle, draw=black, inner sep=2pt] (g0\x) at ({0},{\x}) {};
    \node [circle, draw=black, inner sep=2pt] (g\x 5) at ({\x},{5}) {};
    \node [rectangle, draw=black, inner sep=2.828pt] (g5\x) at ({5},{\x}) {};
}

\def\chippathx#1#2{
\pgfmathparse{int({#1})} \xdef\finalx{\pgfmathresult}
\pgfmathparse{int({#2})} \xdef\y{\pgfmathresult}
\foreach \x in {0,...,4} {
    \pgfmathparse{int(1+\x)} \xdef\xx{\pgfmathresult}
    \pgfmathparse{int(100-(\x<\finalx)*100)} \xdef\bval{\pgfmathresult}
    \path [draw=cbfp3!\bval!verylightgray, opacity=0.75, very thick] (g\x\y) edge node []  {} (g\xx\y);
}
}
\def\chippathy#1#2{
\pgfmathparse{int({#2})} \xdef\finaly{\pgfmathresult}
\pgfmathparse{int({#1})} \xdef\x{\pgfmathresult}
\foreach \y in {0,...,4} {
    \pgfmathparse{int(1+\y)} \xdef\yy{\pgfmathresult}
    \pgfmathparse{int((\y<\finaly)*100)} \xdef\bval{\pgfmathresult}
    \path [draw=cbfp3!\bval!verylightgray, opacity=0.75, very thick] (g\x\y) edge node []  {} (g\x\yy);
}
}
\def\flipvtx#1#2{
\node [diamond, draw=black, inner sep=2pt, fill=cbfp1!50] () at (#1,#2) {};
\chippathx{#1}{#2}
\chippathy{#1}{#2}
}

\flipvtx{1}{4}
\flipvtx{2}{1}
\flipvtx{3}{3}
\flipvtx{4}{2}

\end{scope}

\end{tikzpicture}
\caption[]{A blue diamond indicates a marshaller. Orange lines indicate light beams. The circles and squares indicate sentinel vertices that are useful in simplifying the Holant construction. A constellation is valid iff each diamond is in state \gtE, \gtH, \gtV, \gtC, or \gtF, no circle has an incident beam, and each square has an incident beam.
}
\label{fig:per-from-grid}
\textcolor{lightgray}{\hrule}
\end{figure}

\paragraph{Permanents as Holants on grids.}
The above transformation from permanents into Holant problems directly mirrors the structure of the input graph into the Holant problem instance. 
Recent work on the complexity of multilinear forms~\cite{DBLP:journals/eccc/BrandCKLOS026} established a more input-agnostic transformation: As shown in~\cite{DBLP:journals/eccc/BrandCKLOS026}, the $n\times n$ permanent can be expressed as the Holant of an $n \times n$ \emph{grid graph} $G$. 
In this construction, the vertices of the grid $G$ correspond directly to the entries of $A$, and appropriate signatures ensure that the Holant counts combinatorial structures that correspond directly to row-column permutations $\pi \in S_n$, each weighted by $\prod_{i=1}^n a_{i,\pi i}$.

The structures counted in the Holant problem have a clean combinatorial interpretation, see also \Cref{fig:per-from-grid}: They are constellations of $n$ ``marshallers'' and $n$ vertical and $n$ horizontal ``light beams'' on the grid, subject to the following constraints:\footnote{This construction is very similar to a \emph{non-attacking rook placement}. The marshallers could be viewed as the main characters in the construction, but the beams are crucial to ensure a clean formulation as a Holant problem.}
\begin{enumerate}
\item Each marshaller sends a horizontal beam to the right and a vertical beam downwards. It contributes the factor $a_{i,j}$ to the weight when located at position $(i,j)$.
\item The beams of any marshaller may not hit another marshaller.
\item Beams may cross without any consequences.
\end{enumerate}

It is easy to see that the valid constellations of $n$ marshallers on the $n\times n$ grid correspond bijectively to permutations. Indeed, every such constellation represents a permutation matrix $P_\pi$ overlaid on $A$.
The weight of a constellation is the product of weights from marshallers, and thus equal to $\prod_{i=1}^n a_{i,\pi i}$.

The relevant constraints on marshallers and beams can be enforced straightforwardly with appropriate Holant signatures: If an edge is assigned $1$, it is understood as carrying a beam.

\paragraph{Determinants as Holants on grids.}
In this paper, we observe that the constellations described above also capture the \emph{determinant}: 
Indeed, in the constellation corresponding to a permutation $\pi$, beam crossings occur \emph{precisely} at the inversions of $\pi$. Thus, to express the determinant as a Holant problem over a grid graph, one only needs to introduce a negative sign for crossing beams. In other words, beam crossings are no longer consequence-free; they are still allowed, but they must be accounted for by a negative sign. 
On the level of Holants, this is easily reflected by modifying the signature output for the state $\gtC$ from $+1$ to $-1$.

The modified signature $f$ obtained this way satisfies the \emph{Matchgate Identities}~\cite{DBLP:journals/siamcomp/Valiant02,DBLP:journals/tcs/Valiant02,DBLP:journals/ijsi/CaiC07,DBLP:journals/mst/CaiCL09,DBLP:journals/toc/CaiG14}, which implies the existence of a planar graph $H$ with $O(1)$ vertices and four dangling edges that simulates $f$: For every subset $S$ of dangling edges, the signature value $f(S)$ equals the weighted number of perfect matchings in the graph $H-S$ obtained from $H$ by deleting $S$ and all incident vertices.
Cai and Gorenstein~\cite{DBLP:journals/toc/CaiG14} give an elegant explicit construction of $H$, which was also used, e.g., in \cite{DBLP:conf/sosa/CurticapeanX22}. 
The final planar graph $G_{\bm{X}}$ to simulate the determinant is then obtained by replacing every vertex of signature $f$ in the grid by a copy of the gadget $H$.
The resulting graph again fits neatly into a grid. 

\section{Preliminaries} 
\subsection{Perfect Matchings} 
A perfect matching of a graph $G=(V,E)$ is an edge subset $M \subseteq E$ that every vertex $v\in V$ appears exactly once in $M$. 
On input an edge-weighted graph $G=(V,E,w)$, the computational task \probname{\#PlanarPM} asks for a weighted count of perfect matchings in $G$, where each matching is weighted by the product of its edge weights. 
Formally:
\prob{\probname{\#PlanarPM}}{A simple undirected planar graph $G=(V,E,w)$ with edge weights $w:E\to\mathbb{C}$.}{Weighted sum of all perfect matchings $M\subseteq E$: 
\[
\npm(G)\defeq\sum_{M}\prod_{e\in M}w(e).
\]
}

\subsection{Determinant versus Matrix Multiplication} 
The matrix multiplication constant $\omegamm$ is defined as the smallest real number such that $n$-by-$n$ matrix multiplication admits an algorithm using $n^{\omegamm+o(1)}$ arithmetic operations. 
The matrix determinant constant $\omegadet$ is defined analogously. 
A classical result states that $\omegamm=\omegadet$ under standard algebraic models, and both directions are highly non-trivial:
\begin{enumerate}
    \item Bunch and Hopcroft showed $\omegamm\geq\omegadet$ by means of a carefully-designed LU decomposition and analysis~\cite{MR248973,MR331751}. 
    \item Baur and Strassen established $\omegamm\leq\omegadet$ by computing the partial derivatives together with the original function, and then applying Cramer's Rule \cite{DBLP:journals/tcs/BaurS83}. 
\end{enumerate}
However, the relation between matrix multiplication and matrix determinant is not known beyond algebraic models, e.g., in terms of bit complexity. 
It is worth remarking that Yuster's algorithm runs in $O(n^{\omegamm/2+1})$ bit operations. 

\begin{remark}
There is no need to specify the underlying field in \Cref{lem:reduction-grid}, because the matrix entries are treated as indeterminates. 
\Cref{thm:main} therefore works for every field; one just needs to specify the field for defining the matrix determinant constant. 

On the other hand, it is a major open problem whether these constants are field-independent, i.e., whether $\omega(F)$ agrees for all fields $F$.
All existing techniques for bounding $\omega$ are field-independent, 
and it is known that $\omegamm(F)=\omegamm(F')$ if $\Fchar(F)=\Fchar(F')$
\cite[Theorem 2.8]{DBLP:journals/siamcomp/Schonhage81}, which also suggests field independency.    
\end{remark}

\section{The reduction} \label{sec:proof}

We first describe the construction of individual gadgets and then show how to compose the gadgets to obtain the overall reduction.

\subsection{Gadgets}
Our reduction is based around the planar gadget $H$ shown in \Cref{fig:matchgate}, which has $6$ vertices, $7$ internal edges, and $4$ external \emph{dangling edges}, which we denote by $t,r,b,\ell$, going clockwise from the top. 
If a dangling edge is assigned $0$, the attaching vertex has to be \emph{matched within} the gadget, otherwise it must be left \emph{unmatched within} the gadget. 
Assignments are represented as $4$-bit strings in the order $t,r,b,\ell$, e.g., the string $0110$ represents the assignment $t=\ell=0,r=b=1$. 
The weight of $H$ on a given assignment to the dangling edges is the weighted count of perfect matchings in $H-S$, where $S$ is the set of endpoints of all dangling edges that are assigned $1$.

\begin{figure}[t]
\centering
\begin{tikzpicture}
\clip (-8.1,-2) rectangle (7.3,2);
\draw [black,opacity=0] (-8.1,-2) rectangle (7.3,2);

\begin{scope}[shift={(-6,0)}]
\node [circle, draw=black, fill=white, inner sep=1.5pt] (t) at (0,1.2) {};
\node [circle, draw=black, fill=white, inner sep=1.5pt] (r) at (1.2,0) {};
\node [circle, draw=black, fill=white, inner sep=1.5pt] (b) at (0,-1.2) {};
\node [circle, draw=black, fill=white, inner sep=1.5pt] (l) at (-1.2,0) {};
\node [circle, draw=black, fill=white, inner sep=1.5pt] (tr) at (0.6,0.6) {};
\node [circle, draw=black, fill=white, inner sep=1.5pt] (bl) at (-0.6,-0.6) {};

\path [draw=lightgray, opacity=0.5, very thick] 
(l) edge node [font=\small, midway, inner sep=0.5pt, minimum size=0.2pt, circle, align=center, fill=white,opacity=1] {$\ell$} +(-0.75,0);
\path [draw=lightgray, opacity=0.5, very thick] 
(t) edge node [font=\small, midway, inner sep=0.5pt, minimum size=0.2pt, circle, align=center, fill=white,opacity=1] {$t$} +(0,0.75);
\path [draw=lightgray, opacity=0.5, very thick] 
(r) edge node [font=\small, midway, inner sep=0.5pt, minimum size=0.2pt, circle, align=center, fill=white,opacity=1] {$r$} +(0.75,0);
\path [draw=lightgray, opacity=0.5, very thick] 
(b) edge node [font=\small, midway, inner sep=0.5pt, minimum size=0.2pt, circle, align=center, fill=white,opacity=1] {$b$} +(0,-0.75);

\path [ultra thick, draw=cbfp2] (l) edge node [font=\sffamily\small, sloped, midway, inner sep=0.5pt, minimum size=0.2pt, circle, align=center, fill=white,opacity=1] {\textcolor{cbfp2}{$1-x$}} (t);
\path [ultra thick, draw=cbfp5] (tr) edge node [font=\sffamily\small, sloped, midway, inner sep=0.5pt, minimum size=0.2pt, circle, align=center, fill=white,opacity=1] {\textcolor{cbfp5}{$-1$}} (bl);
\path [ultra thick] (b) edge node [] {} (r);
\path [ultra thick] (t) edge node [] {} (tr);
\path [ultra thick] (tr) edge node [] {} (r);
\path [ultra thick] (b) edge node [] {} (bl);
\path [ultra thick] (bl) edge node [] {} (l);
\end{scope}

\def\danglevtx#1#2#3#4#5#6#7#8{
\node [diamond, draw=black, inner sep=2pt] (t#3#4#5#6) at ({#1},{#2}) {};
\pgfmathparse{int(#3*100)} \xdef\bval{\pgfmathresult}
\ifthenelse{\equal{#3}{1}}{\def\tckness{ultra thick}}{\def\tckness{thick}}
\path [draw=cbfp3!\bval!lightgray, opacity=0.75, \tckness] 
(t#3#4#5#6) edge node [] {} +(-0.75,0);
\pgfmathparse{int(#4*100)} \xdef\bval{\pgfmathresult}
\ifthenelse{\equal{#4}{1}}{\def\tckness{ultra thick}}{\def\tckness{thick}}
\path [draw=cbfp3!\bval!lightgray, opacity=0.75, \tckness] 
(t#3#4#5#6) edge node [] {} +(0,0.75);
\pgfmathparse{int(#5*100)} \xdef\bval{\pgfmathresult}
\ifthenelse{\equal{#5}{1}}{\def\tckness{ultra thick}}{\def\tckness{thick}}
\path [draw=cbfp3!\bval!lightgray, opacity=0.75, \tckness] 
(t#3#4#5#6) edge node [] {} +(0.75,0);
\pgfmathparse{int(#6*100)} \xdef\bval{\pgfmathresult}
\ifthenelse{\equal{#6}{1}}{\def\tckness{ultra thick}}{\def\tckness{thick}}
\path [draw=cbfp3!\bval!lightgray, opacity=0.75, \tckness] 
(t#3#4#5#6) edge node [] {} +(0,-0.75);
\node [] () at ({#1},{#2-1.4}) {#7};
\node [] () at ({#1},{#2+1.4}) {{#8\vphantom{fhligpy}}};
}

\danglevtx{-2}{0}{0}{0}{0}{0}{$1$}{empty}
\danglevtx{0.1}{0}{0}{1}{0}{1}{$1$}{vertical}
\danglevtx{2.2}{0}{1}{0}{1}{0}{$1$}{horizontal}
\danglevtx{4.3}{0}{1}{1}{1}{1}{$-1$}{crossing}
\danglevtx{6.4}{0}{0}{0}{1}{1}{$x$}{marshaller}

\end{tikzpicture}
\caption{The planar $H$-gadget.}
\label{fig:matchgate}

\textcolor{lightgray}{\hrule}
\end{figure}

\begin{definition}[states of a gadget, see {\Cref{fig:matchgate}}]
\[
\begin{matrix}
\gtE & \defeq & 0000 & \text{(empty)}\\
\gtV & \defeq & 1010 & \text{(vertical)}\\
\gtH & \defeq & 0101 & \text{(horizontal)}\\
\gtC & \defeq & 1111 & \text{(crossing)}\\
\gtF & \defeq & 0110 & \text{(marshaller)}
\end{matrix}
\]
The $11$ of $16$ remaining states from $\{0,1\}^4$ are considered \emph{invalid}.
\end{definition}

\begin{lemma} \label{lem:gadget-weight}
The weight of an $H$-gadget is
\begin{align*}
1,      &\quad   \text{if $H$ is \gtE, \gtV or \gtH,}\\
-1,     &\quad   \text{if $H$ is \gtC,}\\
x,      &\quad   \text{if $H$ is \gtF, and}\\
0,      &\quad   \text{if $H$ is invalid.}
\end{align*}
\end{lemma}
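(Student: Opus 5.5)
The plan is to verify \Cref{lem:gadget-weight} by exhaustive case analysis over the $16$ assignments to $t,r,b,\ell$, using the structure of $H$ to keep this short. I read off from \Cref{fig:matchgate} that $H$ is the $6$-cycle $t - tr - r - b - bl - \ell - t$ together with the chord $tr - bl$. All edges have weight $1$, except the cycle edge $\ell t$, which has weight $1-x$, and the chord $tr\,bl$, which has weight $-1$. For each assignment with set $S$ of endpoints assigned $1$, I list the perfect matchings of $H-S$ and sum their weights.

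\textbf{Odd assignments.} The eight assignments of odd Hamming weight leave an odd number of vertices in $H-S$. So $H-S$ has no perfect matching and the weight is $0$; all of these are invalid.

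\textbf{Even valid states.} For the remaining eight assignments I argue directly.
\begin{itemize}
\item For \gtE{} ($S=\emptyset$), a perfect matching either avoids the chord or uses it.
\begin{itemize}
\item Avoiding the chord leaves the two perfect matchings of the $6$-cycle. These are $\{t\,tr, rb, bl\,\ell\}$ of weight $1$ and $\{tr\,r, b\,bl, \ell t\}$ of weight $1-x$.
\item Using the chord forces $\{\ell t, br\}$, giving weight $-(1-x)$.
\end{itemize}
The total is $1+(1-x)-(1-x)=1$.
\item For \gtC{}, only $tr,bl$ remain, and the chord gives $-1$.
\item For \gtV{}, removing $t,b$ leaves $r,tr,bl,\ell$. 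Here $r$ and $\ell$ each have a unique neighbour, forcing $\{tr\,r, bl\,\ell\}$ with weight $1$.
\item For \gtH{}, removing $r,\ell$ symmetrically forces $\{t\,tr, b\,bl\}$ with weight $1$.
\item For \gtF{}, removing $r,b$ leaves $t,tr,bl,\ell$. The matchings are $\{t\,tr, bl\,\ell\}$ and $\{\ell t, tr\,bl\}$, giving $1-(1-x)=x$.
\end{itemize}

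\textbf{Even invalid states.} These are $1100$, $0011$ and $1001$.
\begin{itemize}
\item For $1100$, $tr$ is adjacent only to $bl$. This leaves $b,\ell$, which are non-adjacent, so the weight is $0$.
\item For $0011$, symmetrically $bl$ must match $tr$, leaving the non-adjacent pair $t,r$, so the weight is $0$.
\item For $1001$, the vertices $r,b,tr,bl$ admit $\{tr\,r, b\,bl\}$ (weight $1$) and $\{tr\,bl, br\}$ (weight $-1$), which cancel.
\end{itemize}

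\textbf{Main obstacle.} The key points are the two cancellations that make the gadget work: $x$ appearing in \gtE{} and the state $1001$ vanishing. Both are produced by the $-1$ chord pairing against a chord-free matching. Beyond this, the only real risk is misreading the adjacency of the gadget from the figure, so I would fix the edge list first and then check each case against it.
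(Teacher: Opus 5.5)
Your proof is correct and takes the same approach as the paper: odd-weight assignments vanish by parity, and the eight even assignments are checked directly, with $1100$ and $0011$ admitting no perfect matching and the two matchings in $1001$ cancelling. Your explicit edge list and enumeration write out what the paper delegates to \Cref{fig:matchgate-proof}.
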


\begin{figure}[t]
\centering
\begin{tikzpicture}
\clip (-8,-5.7) rectangle (8,1.1);
\draw [black,opacity=0] (-8,-5.7) rectangle (8,1.1);

\newcommand{\gadgetstate}[9]{
\begin{scope}[shift={(#1)}]
\node [circle, draw=black, fill=white, inner sep=1.2pt] (t) at (0,0.7) {};
\node [circle, draw=black, fill=white, inner sep=1.2pt] (r) at (0.7,0) {};
\node [circle, draw=black, fill=white, inner sep=1.2pt] (b) at (0,-0.7) {};
\node [circle, draw=black, fill=white, inner sep=1.2pt] (l) at (-0.7,0) {};
\node [circle, draw=black, fill=white, inner sep=1.2pt] (tr) at (0.35,0.35) {};
\node [circle, draw=black, fill=white, inner sep=1.2pt] (bl) at (-0.35,-0.35) {};

\path [draw=lightgray, opacity=0.5, thick] 
(l) -- +(-0.4,0)
(t) -- +(0,0.4)
(r) -- +(0.4,0)
(b) -- +(0,-0.4)
(l) -- (t) -- (tr) -- (r) -- (b) -- (bl) -- (l)
(tr) -- (bl)
;

\ifnum#2=1 \path [draw=cbfp3!\bval!lightgray,ultra thick] (t) -- +(0,0.4) (b) -- +(0,-0.4); \fi
\ifnum#2=2 \path [draw=cbfp3!\bval!lightgray,ultra thick] (l) -- +(-0.4,0) (r) -- +(0.4,0); \fi
\ifnum#2=3 \path [draw=cbfp3!\bval!lightgray,ultra thick] (t) -- +(0,0.4) (b) -- +(0,-0.4) (l) -- +(-0.3,0) (r) -- +(0.3,0); \fi
\ifnum#2=4 \path [draw=cbfp3!\bval!lightgray,ultra thick] (r) -- +(0.4,0) (b) -- +(0,-0.4); \fi
\ifnum#2=5 \path [draw=cbfp3!\bval!lightgray,ultra thick] (t) -- +(0,0.4) (l) -- +(-0.4,0); \fi

\ifnum#3=1 
\path [ultra thick, draw=cbfp2] (l) edge node [font=\sffamily\tiny, sloped, midway, inner sep=0.5pt, minimum size=0.2pt, circle, align=center, fill=white,opacity=1] {\textcolor{cbfp2}{$1-x$}} (t);
\fi

\ifnum#4=1 
\path [ultra thick, draw=cbfp5] (tr) edge node [font=\sffamily\tiny, sloped, midway, inner sep=0.5pt, minimum size=0.2pt, circle, align=center, fill=white,opacity=1] {\textcolor{cbfp5}{$-1$}} (bl); 
\fi

\ifnum#5=1 \path [ultra thick] (b) edge node [] {} (r); \fi
\ifnum#6=1 \path [ultra thick] (t) edge node [] {} (tr); \fi
\ifnum#7=1 \path [ultra thick] (tr) edge node [] {} (r); \fi
\ifnum#8=1 \path [ultra thick] (bl) edge node [] {} (l); \fi
\ifnum#9=1 \path [ultra thick] (b) edge node [] {} (bl); \fi

\end{scope}
}

\draw [black, draw opacity=0.5] (-7.2,1.0) rectangle (1.2,-1.4);
\draw [black, draw opacity=0.5] ( 1.8,1.0) rectangle (7.2,-1.4);
\draw [black, draw opacity=0.5] ( 1.8,-2.6) rectangle (7.2,-5.0);

\node [] () at (-3.0,-1.7) {{empty\vphantom{fhligpy}} \gtE};
\node [] () at ( 4.5,-1.7) {{marshaller\vphantom{fhligpy}} \gtF};
\node [] () at ( 4.5,-5.3) {$1001$\texttt{\vphantom{fhligpy}} \gtI};
\node [] () at (-6.0,-5.3) {{vertical\vphantom{fhligpy}} \gtV};
\node [] () at (-3.0,-5.3) {{horizontal\vphantom{fhligpy}} \gtH};
\node [] () at ( 0.0,-5.3) {{crossing\vphantom{fhligpy}} \gtC};

\gadgetstate{-6.0,-0.2}{0}{1}{1}{1}{0}{0}{0}{0}
\gadgetstate{-3.0,-0.2}{0}{1}{0}{0}{0}{1}{0}{1}
\gadgetstate{-0.0,-0.2}{0}{0}{0}{1}{1}{0}{1}{0}

\gadgetstate{ 3.0,-0.2}{4}{1}{1}{0}{0}{0}{0}{0}
\gadgetstate{ 6.0,-0.2}{4}{0}{0}{0}{1}{0}{1}{0}

\gadgetstate{-6.0,-3.8}{1}{0}{0}{0}{0}{1}{1}{0}
\gadgetstate{-3.0,-3.8}{2}{0}{0}{0}{1}{0}{0}{1}
\gadgetstate{ 0.0,-3.8}{3}{0}{1}{0}{0}{0}{0}{0}

\gadgetstate{ 3.0,-3.8}{5}{0}{1}{1}{0}{0}{0}{0}
\gadgetstate{ 6.0,-3.8}{5}{0}{0}{0}{0}{1}{0}{1}

\end{tikzpicture}
\caption{Computing the weight of the $H$-gadget in different states. Cases not shown here have no internal perfect matching. }
\label{fig:matchgate-proof}

\textcolor{lightgray}{\hrule}
\end{figure}

\begin{proof}
For assignments $a$ of odd Hamming weight, the graph $H-S$ has an odd number of vertices and thus no perfect matching, so the weight of $H$ under $a$ is zero.
This leaves $8$ cases to consider. Among these, the cases $1100$ \gtITR{} and $0011$ \gtIBL{} also admit no perfect matching in $H-S$. 
For the remaining $6$ cases, \Cref{fig:matchgate-proof} enumerates all the possible matchings and their weights. Notably, in the invalid state $1001$~\gtI, the two perfect matchings have weights $-1$ and $1$, and they hence cancel.
\end{proof}

To simplify the construction, we introduce two ``sentinel'' gadgets to be placed at the grid border. They are not necessary, but they allow us to state the proof in a uniform way. 

\begin{itemize}
\item The \emph{single-edge} gadget contains vertices $u,v$ connected by an edge, with a dangling edge at $v$. 
\item The \emph{single-vertex} gadget contains a single vertex $v$ with a dangling edge.
\end{itemize}
Both gadgets can be easily removed from the construction without changing the number of perfect matchings. 

\subsection{Graph construction}

Next, we construct an edge-weighted graph $G_{\bm{X}}$ by the following recipe. See \Cref{fig:per-from-grid} for the outcome when $\bm{X}$ is a $4$-by-$4$ matrix.
\begin{enumerate}
\item For all $i,j \in [m]$, introduce a fresh $H$-copy $H_{i,j}$. Replace the indeterminate in $H_{i,j}$ by $x_{ij}$. 
\item For all $i\in [m]$ and $j\in [m-1]$, join the $r$-edge of $H_{i,j}$ with the $\ell$-edge of $H_{i,j+1}$. 
\item For all $i\in [m-1]$ and $j\in [m]$, join the $b$-edge of $H_{i,j}$ with the $t$-edge of $H_{i+1,j}$. 
\item For all $i\in [m]$, introduce a fresh single-edge gadget, and join its dangling edge with the $\ell$-edge of $H_{i,1}$. 
Perform the same for $t$-edge of $H_{1,j}$, for all $j\in [m]$.  
\item For all $i\in [m]$, introduce a fresh single-vertex gadget, and join the dangling edge with the $r$-edge of $H_{i,m}$. 
Perform the same for the $b$-edge of $H_{m,j}$, for all $j\in [m]$.  
\end{enumerate}

Consequently, each row and column contains $m+1$ edges, all unweighted, that are formed by joining two dangling edges. 
Let $\tau$ be a $0$-$1$ assignment to all of these edges; it determines the states of all $H$-gadgets. 
Under this assignment, the weighted perfect matching count $\wt(\tau) \coloneqq \npm(G_{\bm{X}}\mid\tau)$ is the product of the weights of all gadgets under their respective local sub-assignments of $\tau$.

\subsection{Analysis of the construction}
In the following, denote the set of all non-vanishing assignments $\tau$ by 
\[
\Gamma\defeq\{\tau ~\colon~ \wt(\tau)\not\equiv 0\}.
\]
These non-vanishing assignments are characterised by the following series of propositions. 

\begin{proposition} \label{prop:flip}
Let $\tau\in\Gamma$ be non-vanishing and $i\in[m]$. 
For any $j\in[m]$, let $s_j$ be the assignment of $\tau$ to the $r$-edge of $H_{i,j}$, and let $s_0$ be the assignment of $\tau$ to the $\ell$-edge of $H_{i,1}$. 
Then there exists $j^*$ in $[m]$ such that $s_j=0$ for all $j<j^*$ and $s_j=1$ for all $j\geq j^*$. 
\end{proposition}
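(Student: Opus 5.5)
Since $\tau\in\Gamma$, $\wt(\tau)$ is a product of gadget weights and is not identically zero, so every gadget in row $i$ must be in a state with nonzero weight. By \cref{lem:gadget-weight}, each $H_{i,j}$ is then in one of the valid states \gtE, \gtV, \gtH, \gtC, \gtF. The sentinel at the left end of row $i$ also constrains $s_0$. The single-edge gadget has vertices $u,v$ joined by an edge, with the dangling edge at $v$. If its dangling edge carried $1$, then $v$ would have to be left unmatched within the gadget, and $u$ would have no partner; so the gadget contributes $0$. Hence $s_0=0$.

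Next I record, for each valid state, how the $\ell$-bit and the $r$-bit of a gadget are related. In the order $t,r,b,\ell$:
\begin{itemize}
\item \gtE{} $=0000$ has $\ell=0$, $r=0$;
\item \gtV{} $=1010$ has $\ell=0$, $r=0$;
\item \gtH{} $=0101$ has $\ell=1$, $r=1$;
\item \gtC{} $=1111$ has $\ell=1$, $r=1$;
\item \gtF{} $=0110$ has $\ell=0$, $r=1$.
\end{itemize}
So in every valid state, either $\ell=r$, or $(\ell,r)=(0,1)$. In particular, no valid state has $\ell=1$ and $r=0$.

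The $\ell$-edge of $H_{i,j}$ is joined to the $r$-edge of $H_{i,j-1}$ for $j\ge2$, and to the sentinel for $j=1$. Hence the $\ell$-bit of $H_{i,j}$ is $s_{j-1}$ and its $r$-bit is $s_j$. The previous step gives $s_{j-1}\le s_j$ for all $j\in[m]$, so $0=s_0\le s_1\le\dots\le s_m$ is a monotone $0$-$1$ sequence.

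It remains to show $s_m=1$, so that the first index with $s_j=1$ lies in $[m]$. The $r$-edge of $H_{i,m}$ is joined to a single-vertex gadget. If that edge carried $0$, the lone vertex would have to be matched within its gadget, which is impossible, and the weight would vanish. Thus $s_m=1$. Taking $j^*=\min\{j: s_j=1\}\in[m]$ and using monotonicity gives the claim.

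The main point needing care is the sentinel conventions: how the join of two dangling edges transmits one shared bit, and why each sentinel forces its bit. Everything else is a direct reading of the state table.
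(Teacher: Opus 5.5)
Your proof is correct and follows the paper's argument. Validity of every gadget rules out an $(\ell,r)=(1,0)$ transition along the row, and the single-edge and single-vertex sentinels force $s_0=0$ and $s_m=1$; you have just spelled out the state-table check and the sentinel reasoning that the paper states in one line each.
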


\begin{proof}
No gadget is invalid since $\tau\in\Gamma$. 
This means there is no such $j$ that $s_j=1$ while $s_{j+1}=0$. 
Further, the single-edge gadget on the $i$-th row ensures that $s_0=0$, and the single-vertex gadget on the $i$-th row ensures that $s_m=1$. 
This leaves the patterns in the proposition the only possibilities. 
\end{proof}

A column version of \Cref{prop:flip} holds analogously. 
Therefore, each row and column has exactly one \gtF{} gadget.
Furthermore, once the positions of all these \gtF{} gadgets are fixed, the only non-vanishing mapping $\tau$ is also determined. 
This yields the following bijection. 
\begin{corollary} \label{cor:bijection}
There is a bijection $h:\Gamma\to S_m$. 
If $h(\tau)=\pi$, then $H_{i,\pi i}$ is the \gtF{} gadget of $\tau$ on the $i$-th row, for any $i\in[m]$.
\end{corollary}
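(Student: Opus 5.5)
We prove \Cref{cor:bijection} by defining $h$ explicitly, checking that its image lies in $S_m$, and exhibiting an inverse map from $S_m$ back to $\Gamma$.

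\textbf{Definition of $h$.} Fix $\tau\in\Gamma$ and a row $i$. By \Cref{prop:flip}, the horizontal edge values $s_0,\dots,s_m$ along row $i$ switch from $0$ to $1$ exactly once, at some index $j^*=j^*(i)$. Hence $H_{i,j^*}$ has $\ell=0$ and $r=1$, while every other gadget in row $i$ has $\ell=r$. Checking the five valid states, only \gtF{} has $\ell\neq r$: it has $\ell=0$, $r=1$. So $H_{i,j^*}$ is the unique \gtF{} gadget in row $i$. The column version of \Cref{prop:flip}, applied to the $b$/$t$ edges, shows that column $j$ likewise has exactly one gadget with $t=0$, $b=1$, i.e.\ exactly one \gtF{} gadget. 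Set $\pi(i)\defeq j^*(i)$. Since each column contains exactly one \gtF{} gadget, $\pi$ is injective, hence a permutation. This defines $h(\tau)=\pi$ with the required property.

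\textbf{Injectivity.} The positions of the \gtF{} gadgets determine $\tau$. In row $i$, every horizontal edge strictly left of column $\pi(i)$ carries $0$, and every one at or to the right carries $1$. In column $j$, every vertical edge above row $\pi^{-1}(j)$ carries $0$, and every one at or below carries $1$. Since $\tau$ consists only of these horizontal and vertical edges, $\tau$ is determined by $\pi$.

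\textbf{Surjectivity.} Given $\pi\in S_m$, define $\tau_\pi$ by the rule just stated, and verify that every gadget is in a valid state. Take gadget $(i,j)$.
\begin{itemize}
\item If $j=\pi(i)$ and $i=\pi^{-1}(j)$, it has $t=0$, $\ell=0$, $r=1$, $b=1$, which is state $0110$, i.e.\ \gtF{}.
\item Otherwise $\ell=r$ and $t=b$, so the state is one of \gtE, \gtV, \gtH, \gtC{}, all valid.
\end{itemize}
The sentinel gadgets are also satisfied. The single-edge gadgets see $0$, since the leftmost horizontal edge and topmost vertical edge are $0$. The single-vertex gadgets see $1$, since the final edges are $1$.

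It remains to show $\wt(\tau_\pi)\not\equiv 0$. By \Cref{lem:gadget-weight}, $\wt(\tau_\pi)$ is a product of factors $\pm1$ and $x_{i,\pi i}$, which is a nonzero monomial up to sign. Therefore $\tau_\pi\in\Gamma$ and $h(\tau_\pi)=\pi$.

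The only step needing care is the observation that among the valid states exactly \gtF{} has $\ell\neq r$ (and exactly \gtF{} has $t\neq b$). This is what converts the monotone patterns of \Cref{prop:flip} into the positions of the \gtF{} gadgets; everything else is bookkeeping.
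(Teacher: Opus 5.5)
Your proof is correct and follows the paper's route: \Cref{prop:flip} and its column analogue force exactly one \gtF{} gadget per row and column, and the positions of these gadgets determine $\tau$. You also check surjectivity explicitly, showing that every $\pi\in S_m$ yields a valid assignment $\tau_\pi$ with $\wt(\tau_\pi)=\pm\prod_i x_{i,\pi i}\not\equiv 0$; the paper leaves this step implicit.
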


It is left for us to compute $\wt(\tau)$. 
In the following, recall that an ordered pair $i,j$ forms an inversion if $i<j$ and $\pi i>\pi j$. 

\begin{lemma}\label{lem:inversions}
Let $\tau\in\Gamma$ and $\pi=h(\tau)$. The number of inversions of $\pi$ equals the number of \gtC{} gadgets in $\tau$.
\end{lemma}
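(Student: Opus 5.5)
We will first determine, for every gadget $H_{i,j}$, its state under $\tau$ in terms of $\pi=h(\tau)$, and then count the crossing gadgets directly. By \Cref{prop:flip}, in row $i$ the horizontal edges are $0$ to the left of the marshaller position and $1$ from it onward. Precisely, the $\ell$-edge of $H_{i,j}$ carries $1$ iff $j>\pi i$, and the $r$-edge carries $1$ iff $j\geq \pi i$. For columns we use the column version of \Cref{prop:flip}, together with the fact that the marshaller of column $j$ sits in row $\pi^{-1}j$. Then the $t$-edge of $H_{i,j}$ carries $1$ iff $i>\pi^{-1}j$, and the $b$-edge carries $1$ iff $i\geq \pi^{-1} j$. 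So the horizontal beam of row $i$ occupies the gadgets strictly right of column $\pi i$, and the vertical beam of column $j$ occupies the gadgets strictly below row $\pi^{-1}j$.

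From this description, $H_{i,j}$ is in state \gtC{} exactly when all four dangling edges carry $1$, which happens iff $j>\pi i$ and $i>\pi^{-1}j$. We check that these conditions force the state to be \gtC{} and not \gtF{}: since $j\neq\pi i$, the gadget is not the marshaller of its row.

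The counting is a bijection. Set $k=\pi^{-1}j$, so $j=\pi k$. The crossing condition becomes $k<i$ and $\pi k>\pi i$, which says exactly that the ordered pair $(k,i)$ is an inversion of $\pi$. The map $(i,j)\mapsto(\pi^{-1}j,i)$ is injective and sends crossing gadgets to inversions. Its inverse $(k,i)\mapsto(i,\pi k)$ sends every inversion to a crossing gadget. Hence the two counts coincide.

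The only step needing care is the first one: deriving each edge value from the threshold pattern of \Cref{prop:flip}. This includes the boundary conventions given by the sentinel gadgets, namely $s_0=0$ and $s_m=1$, and getting the inequalities strict versus non-strict correct at the marshaller position. Once these are fixed, the remaining steps are routine bookkeeping.
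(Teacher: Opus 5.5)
Your proof is correct and follows the paper's argument: your inverse map $(k,i)\mapsto(i,\pi k)$ is exactly the paper's bijection $\varphi\colon(i,j)\mapsto(j,\pi i)$ from inversions of $\pi$ to crossing gadgets. The only difference is that you first derive the threshold description of every edge, which gives that $H_{i,j}$ is crossing iff $j>\pi i$ and $i>\pi^{-1}j$; this makes explicit what the paper uses implicitly when it reads off relative positions of marshallers in both directions of its argument.
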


\begin{figure}[t]
\centering
\begin{tikzpicture}
\clip (-7,-1.6) rectangle (7,1.6);
\draw [black,opacity=0] (-7,-1.6) rectangle (7,1.6);

\begin{scope}[shift={(-3.5,0)}]
    
\node [diamond, draw=black, inner sep=2pt, fill=cbfp1!50] (g1i) at (-1.75,0) {};
\node [diamond, draw=black, inner sep=2pt, fill=cbfp1!50] (g1j) at (0,1.25) {};
\node [diamond, draw=black, inner sep=2pt] (g1c) at (0,0) {};
\node [rectangle, draw=black, inner sep=2pt] (g1ix) at (-1.75,-1) {};
\node [rectangle, draw=black, inner sep=2pt] (g1jx) at (0,-1) {};
\node [rectangle, draw=black, inner sep=2pt] (g1iy) at (1.25,0) {};
\node [rectangle, draw=black, inner sep=2pt] (g1jy) at (1.25,1.25) {};

\path [draw=cbfp3!, very thick] (g1ix) -- (g1i) -- (g1c) -- (g1iy);
\path [draw=cbfp3!, very thick] (g1jx) -- (g1c) -- (g1j) -- (g1jy);

\node [below=0 of g1ix] (t1) {\small $i$};
\node [below=0 of g1jx] (t2) {\small $j$};
\node [right=0 of g1iy] (t3) {\small $\pi i$};
\node [right=0 of g1jy] (t4) {\small $\pi j$};
\node () at ($(t1)!0.5!(t2)$) {\small $<$};
\node () at ($(t3)!0.5!(t4)$) {\small $\land$};
\node [above left=-0.125 of g1c] {\scriptsize $(j,\pi i)$};

\end{scope}

\begin{scope}[shift={(3.5,0)}]
    
\node [diamond, draw=black, inner sep=2pt, fill=cbfp1!50] (g2i) at (-1.75,1.25) {};
\node [diamond, draw=black, inner sep=2pt, fill=cbfp1!50] (g2j) at (0,0) {};
\node [rectangle, draw=black, inner sep=2pt] (g2ix) at (-1.75,-1) {};
\node [rectangle, draw=black, inner sep=2pt] (g2jx) at (0,-1) {};
\node [rectangle, draw=black, inner sep=2pt] (g2iy) at (1.25,1.25) {};
\node [rectangle, draw=black, inner sep=2pt] (g2jy) at (1.25,0) {};

\path [draw=cbfp3!, very thick] (g2ix) -- (g2i) -- (g2iy);
\path [draw=cbfp3!, very thick] (g2jx) -- (g2j) -- (g2jy);

\node [below=0 of g2ix] (s1) {\small $i$};
\node [below=0 of g2jx] (s2) {\small $j$};
\node [right=0 of g2iy] (s3) {\small $\pi i$};
\node [right=0 of g2jy] (s4) {\small $\pi j$};
\node () at ($(s1)!0.5!(s2)$) {\small $<$};
\node () at ($(s3)!0.5!(s4)$) {\small $\land$};

\end{scope}

\end{tikzpicture}
\caption[]{The gadget $H_{j,\pi i}$ is in state \gtC{} iff $(i,j)$ forms an inversion, i.e., $i<j$ and $\pi i>\pi j$. The left part shows an inversion, the right part shows a non-inversion.}

\label{fig:inversion}

\textcolor{lightgray}{\hrule}
\end{figure}

\begin{proof}
The mapping $\varphi: (i,j)\mapsto(j,\pi i)$ clearly is a bijection $[m]^2 \to [m]^2$. 
We show that $\varphi$ is a bijection between inversions in $\pi$ and \gtC{} gadgets in $\tau$. 
(See also \Cref{fig:inversion} for a quick illustration.)

\begin{itemize}
\setlength\itemsep{0em}
\item Suppose $i,j$ forms an inversion. 
Then $i<j$ and the column indices of the \gtF{} gadgets on row $i$ and $j$ satisfy $\pi i>\pi j$, leaving $H_{j,\pi i}$ in state \gtC{}. 
\item Conversely, suppose $H_{j,\pi i}$ is a \gtC{} gadget. 
The \gtF{} gadget on row $j$ lies at position $(j,\pi j)$, which implies $\pi i>\pi j$. 
The \gtF{} gadget on column $\pi i$ is at $(\pi^{-1}\pi i,\pi i)=(i,\pi i)$, and therefore $i<j$. 
Hence $(i,j)$ forms an inversion. \qedhere
\end{itemize}
\end{proof}

\subsection{Proof of main theorems}

All parts are now ready to prove the main theorems.

\begin{proof}[Proof of~\Cref{lem:reduction-grid}]
For any $\tau\in\Gamma$, using \Cref{lem:gadget-weight,lem:inversions} to compute the weight, we have
\[
\wt(\tau)\equiv (-1)^{\inv(\pi)}\prod_{i=1}^{m}x_{i, \pi i}
\]
with the bijection $\pi=h(\tau)$ from \Cref{cor:bijection}. 
Since only assignments $\tau \in \Gamma$ contribute to $\npm(G_{\bm{X}})$, 
\[
\npm(G_{\bm{X}})\equiv\sum_{\tau\in\Gamma}\wt(\tau)\equiv\sum_{\pi\in S_m}(-1)^{\inv(\pi)}\prod_{i=1}^{m}x_{i, \pi i}\equiv\det(\bm{X}). 
\]
Finally, the graph $G_{\bm{X}}$ is indeed the subgraph of an $3m\times 5m$ grid.
By assigning weight $0$ to all unused grid edges, $G_{\bm X}$ may also be viewed as an edge-weighted $3m\times 5m$ grid graph.
\Cref{fig:grid-embedding} depicts an exemplary grid embedding of $G_{\bm{X}}$ up to single-edge gadgets.
\end{proof}

\begin{figure}[t]
\centering
\begin{tikzpicture}
\clip (-8,1) rectangle (8,9.8);
\draw [black,opacity=0] (-8,1) rectangle (8,9.8);

\tikzset{dedge/.style={font=\ttfamily\small, pos=0.67, inner sep=0.5pt, minimum size=0.2pt, circle, align=center, fill=white,opacity=1}}

\newcommand{\gridgadget}[4]{
\draw [draw opacity=0, fill=cbfp4!15!white] ({#1-1.5*#3+0.1},{#2-1*#3+0.1}) rectangle ({#1+1.5*#3-0.1},{#2+1*#3-0.1});
\foreach \x in {1,2,3} {
\foreach \y in {1,2} {
\node [circle, draw=black, fill=white, inner sep=1.5pt] (\x\y#4) at ({(#1+#3*(-2+\x))},{(#2+#3*(-1.5+\y))}) {};
}
}
\path [very thick, draw=black] (11#4) -- (21#4) -- (31#4) -- (32#4) -- (22#4) -- (12#4);
\path [very thick, draw=cbfp2] (11#4) edge node [] {} (12#4);
\path [very thick, draw=cbfp5] (21#4) edge node [] {} (22#4);
}

\begin{scope}[shift={(-2,0)},scale=0.75]
\foreach \xx in {4,3,2,1}{
\foreach \yy in {1,2,3,4}{
    \pgfmathparse{int(3*\xx-2*\yy)} \xdef\xxx{\pgfmathresult}
    \pgfmathparse{int(\xx+2*\yy)} \xdef\yyy{\pgfmathresult}
    \gridgadget{\xxx}{\yyy}{1}{a\xx\yy}
    
    \ifthenelse{\xx = 4}{
        \node [circle, draw=black, fill=white, inner sep=1.5pt] () at ($(32a\xx\yy)+(1,0)$) {} edge [very thick, draw=black] (32a\xx\yy);
    }{
        \pgfmathparse{int(\xx+1)} \xdef\xo{\pgfmathresult}
        \path [very thick, draw=black] (32a\xx\yy) -- (11a\xo\yy);;
    }

    \ifthenelse{\yy = 1}{
        \node [circle, draw=black, fill=white, inner sep=1.5pt] () at ($(31a\xx\yy)+(0,-1)$) {} edge [very thick, draw=black] (31a\xx\yy);
    }{
        \pgfmathparse{int(\yy-1)} \xdef\yo{\pgfmathresult}
        \path [very thick, draw=black] (31a\xx\yy) -- (12a\xx\yo);;
    }

}
}

\end{scope}
\end{tikzpicture}
\caption{A grid embedding of the graph $G_{\bm{X}}$, where $\bm{X}$ is a $4$-by-$4$ matrix.}
\label{fig:grid-embedding}

\textcolor{lightgray}{\hrule}
\end{figure}

\begin{proof}[Proof of~\Cref{thm:main}]
Assume $\omega=\omegadet > 2$, as the statement holds vacuously otherwise. 
Suppose there is an algorithm $\mathcal{A}$ that solves \probname{\#PlanarPM} in time $O(n^{\omega/2-\epsilon})$ for $\epsilon>0$.
Given $A\in F^{m\times m}$, construct the graph $G_A$ obtained from \Cref{lem:reduction-grid} by substituting $x_{ij}=a_{ij}$.
Then $\npm(G_A)=\det(A)$, and running $\mathcal{A}$ on $G_A$ outputs $\det(A)$ in time $O(m^{\omega-2\epsilon})$, contradicting the definition of $\omega$. 
The graph $G_{A}$ is a grid subgraph; see \Cref{fig:grid-embedding}.
\end{proof}

To show \Cref{thm:main-modulo}, we replace the edges of weights $1-x_{ij}$ or $-1$ with the gadget $M_k$ shown in \Cref{fig:mod-gadget}. 
The following proposition is immediate.

\begin{figure}[t]
\begin{tikzpicture}
\tikzset{dnode/.style={circle, draw=black, fill=white, inner sep=1.5pt}}
\clip (-8,-2.15) rectangle (8,1.75);
\draw [black,opacity=0] (-8,-2.15) rectangle (8,1.75);

\begin{scope}[shift={(-4,1.2)}]
\node [circle, draw=black, fill=white, inner sep=2pt] (l4) at (-1.2,-1.5) {};
\node [circle, draw=black, fill=white, inner sep=2pt] (r4) at ( 1.2,-1.5) {};
\path [ultra thick, draw=black] (l4) edge node [above] {\small $w$} (r4);

\node [] at (-1.1,-1.8) {$u$}; 
\node [] at (1.1,-1.8) {$v$}; 
                                
\path [draw=lightgray, thick] 
    (l4) -- +(-0.8,-0.6)
    (l4) -- +(-0.917,0.4)
    (r4) -- +(0.8,-0.6)
    (r4) -- +(0.917,0.4)
;
\end{scope}

\begin{scope}[shift={(4,0)}]
\node [dnode] (l) at (-2.1,0) {}; 
\node [dnode] (r) at ( 2.1,0) {};
\node [dnode] (l1) at (-0.7,1.2) {};
\node [dnode] (r1) at ( 0.7,1.2) {};
\node [dnode] (l2) at (-0.7,0.6) {};
\node [dnode] (r2) at ( 0.7,0.6) {};
\node [dnode] (l3) at (-0.7,-0.6) {};
\node [dnode] (r3) at ( 0.7,-0.6) {};
\node [circle, draw=black, fill=white, inner sep=2pt] (l4) at (-1.2,-1.5) {};
\node [circle, draw=black, fill=white, inner sep=2pt] (r4) at ( 1.2,-1.5) {};
\path [very thick, draw=black]  (l) -- (l1) -- (r1) -- (r)
                                (l) -- (l2) -- (r2) -- (r)
                                (l) -- (l3) -- (r3) -- (r)
                                (l) -- (l4)
                                (r4) -- (r);
\path [ultra thick, draw=black] (l4) -- (r4);
                                
\path [draw=lightgray, thick] 
    (l4) -- +(-0.8,-0.6)
    (l4) -- +(-0.917,0.4)
    (r4) -- +(0.8,-0.6)
    (r4) -- +(0.917,0.4)
;

\node [] at (0,0) {$\vdots$}; 
\node [] at (-1.1,-1.8) {$u^*$}; 
\node [] at (1.1,-1.8) {$v^*$}; 

\node [] at (-0.8,1.5) {\small $u_1$}; 
\node [] at (-0.8,0.3) {\small $u_2$}; 
\node [] at (-0.8,-0.9) {\small $u_k$}; 
\node [] at (0.8,1.5) {\small $v_1$}; 
\node [] at (0.8,0.3) {\small $v_2$}; 
\node [] at (0.8,-0.9) {\small $v_k$}; 
\end{scope}

\end{tikzpicture}
\caption{Replacing an edge $uv$ of weight $w\neq 1$ with gadget $M_k$ for $k=(w-1)^{-1}$ over $\mathbb{F}_p$.}
\label{fig:mod-gadget}

\textcolor{lightgray}{\hrule}
\end{figure}

\begin{proposition} \label{prop:modulo-gadget}
The graph $M_k$ has $k+1$ perfect matchings when both dangling edges are assigned $0$, and $k$ perfect matchings when both dangling edges are assigned $1$. 
\end{proposition}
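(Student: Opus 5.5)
Reading the figure, $M_k$ consists of the two attachment vertices $u^*,v^*$ joined by the (unit-weight) edge $u^*v^*$, two hub vertices $u$ (the node labelled $l$) and $v$ (the node labelled $r$), with edges $uu^*$ and $vv^*$, and $k$ internal paths $u - u_i - v_i - v$ for $i\in[k]$. Each of $u^*$ and $v^*$ carries a dangling edge to the rest of the graph, in the role of the replaced edge's endpoints. Throughout I use the gadget convention of \Cref{lem:gadget-weight}: a dangling edge assigned $1$ means its attaching vertex is deleted, and a dangling edge assigned $0$ means the vertex must be matched inside the gadget. The plan is a direct enumeration of perfect matchings in the two cases, organised by how the hubs $u$ and $v$ are matched.

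First, the key structural observation: every perfect matching of the internal paths must match $u_i$ and $v_i$ for each $i$. If $u_i$ is matched to $u$, then $v_i$ can only be matched to $v$. Hence in any perfect matching, either every pair $u_iv_i$ is matched to each other, or exactly one index $i$ has $u u_i$ and $v_i v$ in the matching and all other $u_jv_j$ are matched.

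Case both dangling edges assigned $1$: delete $u^*$ and $v^*$. Then $u$ and $v$ must be matched into the paths. A single index $i$ takes both hubs, so there are exactly $k$ perfect matchings, one per choice of $i$.

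Case both dangling edges assigned $0$: $u^*$ and $v^*$ stay, and there are two kinds of matchings.
\begin{enumerate}
\item If $u^*v^*$ is in the matching, then $u,v$ must be covered via some path $i$, giving $k$ matchings.
\item Otherwise $u^*$ must match $u$ and $v^*$ must match $v$, forcing all $u_iv_i$; this gives exactly one matching.
\end{enumerate}
The total is $k+1$.

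The only step needing care is checking that no other configurations exist, for example $u$ matched to $u_i$ while $v$ is matched to $v_j$ with $j\neq i$. This is excluded because $v_i$ would then have no available partner. All edge weights in $M_k$ are $1$, so the counts are plain counts. Consequently, over $\mathbb{F}_p$ the ratio of the two cases is $k/(k+1)$, matching the intended effective weight when $k=(w-1)^{-1}$.
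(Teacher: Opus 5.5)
Your proof is correct and is the direct case enumeration the paper leaves implicit when it calls the proposition immediate: splitting on whether the hubs are matched to $u^*,v^*$ or into a single path $u_i v_i$ gives $k$ when $u^*,v^*$ are deleted and $k+1$ when they must be matched within the gadget. One small slip outside the proposition is that the effective weight is the selected-over-unselected ratio $(k+1)/k = w$, whereas the ratio $k/(k+1)$ you quote equals $1/w$; also, calling the hubs $u,v$ clashes with the paper's use of $u,v$ for the endpoints identified with $u^*,v^*$.
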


\begin{proof}[Proof of~\Cref{thm:main-modulo}]
The theorem is trivial over $\mathbb{F}_2$, so assume $p\geq 3$. 
All equations are taken over $\mathbb{F}_p$. 

First, replace all indeterminates in $G_{\bm{X}}$ with the concrete values from the input matrix. 
Then, for any edge $e=(u,v)$ with weight $w(e)\notin\{0,1\}$ in $G_{\bm{X}}$, 
introduce a fresh gadget $M_k$ with $k=(w(e)-1)^{-1}\in\{1,\dots,p-1\}$, remove the original edge, and identify $u$ with $u^*$, $v$ with $v^*$. 
Call the new graph $G^*$. 
By \Cref{prop:modulo-gadget}, each original edge with weight $w(e)$ ($w(e)\notin\{0,1\}$) contributes a factor of $k$ when not selected in the perfect matching, and $k+1=k\cdot w(e)$ when selected in the perfect matching. 
Combining with \Cref{lem:reduction-grid}, we have
\[
\npm(G^*)=\left(\prod_{\substack{e\in E(G_{\bm{X}})\\w(e)\notin\{0,1\}}}(w(e)-1)^{-1}\right)\cdot\npm(G_{\bm{X}})=\left(\prod_{\substack{e\in E(G_{\bm{X}})\\w(e)\notin\{0,1\}}}(w(e)-1)^{-1}\right)\cdot\det(\bm{X}). 
\]

The factor in the above equation is non-zero modulo $p$. 
The number of vertices of the new graph $G^*$ blows up by a factor of at most $3p$. 
It takes time $O(|V(G)|\cdot p)$ to construct the graph $G^*$ and compute the factor. 
This turns any algorithm that computes $\npm(G^*)\bmod{p}$ in time $O(n^{\omega/2-\epsilon})$ into an algorithm that computes $\det(\bm{X})$ in time $O(m^{\omega-2\epsilon})$, a contradiction.
\end{proof}

\bibliographystyle{alpha}
\bibliography{refs.bib}

\newcommand{\etalchar}[1]{$^{#1}$}
\begin{thebibliography}{CFGW22}

\bibitem[Bac21]{DBLP:journals/siamcomp/Backens21}
Miriam Backens.
\newblock A full dichotomy for {$\textsf{Holant}^{\mathbf{c}}$}, inspired by quantum computation.
\newblock {\em {SIAM} J. Comput.}, 50(6):1739--1799, 2021.

\bibitem[Bax82]{Baxter82}
Rodney~J. Baxter.
\newblock {\em Exactly Solved Models in Statistical Mechanics}.
\newblock Academic press limited {London}, 1982.

\bibitem[BCK{\etalchar{+}}26]{DBLP:journals/eccc/BrandCKLOS026}
Cornelius Brand, Radu Curticapean, Petteri Kaski, Baitian Li, Ian Orzel, Tim Seppelt, and Jiaheng Wang.
\newblock Beyond bilinear complexity: What works and what breaks with many modes?
\newblock {\em Electron. Colloquium Comput. Complex.}, {TR26-025}, 2026.
\newblock To appear in CCC 2026.

\bibitem[Ber84]{DBLP:journals/ipl/Berkowitz84}
Stuart~J. Berkowitz.
\newblock On computing the determinant in small parallel time using a small number of processors.
\newblock {\em Inf. Process. Lett.}, 18(3):147--150, 1984.

\bibitem[BH74]{MR331751}
James~R. Bunch and John~E. Hopcroft.
\newblock Triangular factorization and inversion by fast matrix multiplication.
\newblock {\em Math. Comp.}, 28:231--236, 1974.

\bibitem[BS83]{DBLP:journals/tcs/BaurS83}
Walter Baur and Volker Strassen.
\newblock The complexity of partial derivatives.
\newblock {\em Theor. Comput. Sci.}, 22:317--330, 1983.

\bibitem[B{\"{u}}r24]{DBLP:journals/corr/abs-2406-06217}
Peter B{\"{u}}rgisser.
\newblock Completeness classes in algebraic complexity theory.
\newblock {\em CoRR}, abs/2406.06217, 2024.

\bibitem[CC07]{DBLP:journals/ijsi/CaiC07}
Jin{-}yi Cai and Vinay Choudhary.
\newblock Some results on matchgates and holographic algorithms.
\newblock {\em Int. J. Softw. Informatics}, 1(1):3--36, 2007.

\bibitem[CC17]{CaiChenBook}
Jin-Yi Cai and Xi~Chen.
\newblock {\em Complexity dichotomies for counting problems. {V}ol. 1}.
\newblock Cambridge University Press, Cambridge, 2017.
\newblock Boolean domain.

\bibitem[CCL09]{DBLP:journals/mst/CaiCL09}
Jin{-}yi Cai, Vinay Choudhary, and Pinyan Lu.
\newblock On the theory of matchgate computations.
\newblock {\em Theory Comput. Syst.}, 45(1):108--132, 2009.

\bibitem[CF22]{DBLP:journals/siamcomp/CaiF22}
Jin{-}Yi Cai and Zhiguo Fu.
\newblock Holographic algorithm with matchgates is universal for planar {\#}csp over boolean domain.
\newblock {\em {SIAM} J. Comput.}, 51(2):17--50, 2022.

\bibitem[CFGW22]{DBLP:journals/mst/CaiFGW22}
Jin{-}Yi Cai, Zhiguo Fu, Heng Guo, and Tyson Williams.
\newblock {FKT} is not universal - {A} planar {H}olant dichotomy for symmetric constraints.
\newblock {\em Theory Comput. Syst.}, 66(1):143--308, 2022.

\bibitem[CG14]{DBLP:journals/toc/CaiG14}
Jin{-}Yi Cai and Aaron Gorenstein.
\newblock Matchgates revisited.
\newblock {\em Theory Comput.}, 10:167--197, 2014.

\bibitem[CL11]{DBLP:journals/jcss/CaiL11}
Jin{-}yi Cai and Pinyan Lu.
\newblock Holographic algorithms: From art to science.
\newblock {\em J. Comput. Syst. Sci.}, 77(1):41--61, 2011.

\bibitem[CLS{\etalchar{+}}24]{CLSWW2024}
Matthias Christandl, Vladimir Lysikov, Vincent Steffan, Albert~H. Werner, and Freek Witteveen.
\newblock The resource theory of tensor networks.
\newblock {\em {Quantum}}, 8:1560, December 2024.

\bibitem[CLX20]{DBLP:journals/mst/CaiLX20}
Jin{-}Yi Cai, Pinyan Lu, and Mingji Xia.
\newblock Dichotomy for {H}olant\({}^{\mbox{{\({_\ast}\)}}}\) problems on the boolean domain.
\newblock {\em Theory Comput. Syst.}, 64(8):1362--1391, 2020.

\bibitem[CX22]{DBLP:conf/sosa/CurticapeanX22}
Radu Curticapean and Mingji Xia.
\newblock Parameterizing the permanent: Hardness for fixed excluded minors.
\newblock In Karl Bringmann and Timothy~M. Chan, editors, {\em 5th Symposium on Simplicity in Algorithms, SOSA@SODA 2022, Virtual Conference, January 10-11, 2022}, pages 297--307. {SIAM}, 2022.

\bibitem[Edm65]{MR177907}
Jack Edmonds.
\newblock Paths, trees, and flowers.
\newblock {\em Canadian J. Math.}, 17:449--467, 1965.

\bibitem[EV15]{PhysRevLett.115.180405}
G.~Evenbly and G.~Vidal.
\newblock Tensor network renormalization.
\newblock {\em Phys. Rev. Lett.}, 115:180405, Oct 2015.

\bibitem[FKL07]{DBLP:conf/isaac/FlarupKL07}
Uffe Flarup, Pascal Koiran, and Laurent Lyaudet.
\newblock On the expressive power of planar perfect matching and permanents of bounded treewidth matrices.
\newblock In Takeshi Tokuyama, editor, {\em Algorithms and Computation, 18th International Symposium, {ISAAC} 2007, Sendai, Japan, December 17-19, 2007, Proceedings}, volume 4835 of {\em Lecture Notes in Computer Science}, pages 124--136. Springer, 2007.

\bibitem[Geo73]{MR388756}
Alan George.
\newblock Nested dissection of a regular finite element mesh.
\newblock {\em SIAM J. Numer. Anal.}, 10:345--363, 1973.

\bibitem[HL16]{DBLP:journals/cc/HuangL16}
Sangxia Huang and Pinyan Lu.
\newblock A dichotomy for real weighted {H}olant problems.
\newblock {\em Comput. Complex.}, 25(1):255--304, 2016.

\bibitem[HPM{\etalchar{+}}19]{Huggins_2019}
William Huggins, Piyush Patil, Bradley Mitchell, K~Birgitta Whaley, and E~Miles Stoudenmire.
\newblock Towards quantum machine learning with tensor networks.
\newblock {\em Quantum Science and Technology}, 4(2):024001, Jan 2019.

\bibitem[IP01]{ImpagliazzoP01}
Russell Impagliazzo and Ramamohan Paturi.
\newblock On the complexity of k-{SAT}.
\newblock {\em J. Comput. Syst. Sci.}, 62(2):367--375, 2001.

\bibitem[JVW90]{MR1049758}
F.~Jaeger, D.~L. Vertigan, and D.~J.~A. Welsh.
\newblock On the computational complexity of the {J}ones and {T}utte polynomials.
\newblock {\em Math. Proc. Cambridge Philos. Soc.}, 108(1):35--53, 1990.

\bibitem[Kas63]{MR153427}
P.~W. Kasteleyn.
\newblock Dimer statistics and phase transitions.
\newblock {\em J. Mathematical Phys.}, 4:287--293, 1963.

\bibitem[KK08]{DBLP:conf/issac/KaltofenK08}
Erich~L. Kaltofen and Pascal Koiran.
\newblock Expressing a fraction of two determinants as a determinant.
\newblock In J.~Rafael Sendra and Laureano Gonz{\'{a}}lez{-}Vega, editors, {\em Symbolic and Algebraic Computation, International Symposium, {ISSAC} 2008, Linz/Hagenberg, Austria, July 20-23, 2008, Proceedings}, pages 141--146. {ACM}, 2008.

\bibitem[LRT79]{MR526496}
Richard~J. Lipton, Donald~J. Rose, and Robert~Endre Tarjan.
\newblock Generalized nested dissection.
\newblock {\em SIAM J. Numer. Anal.}, 16(2):346--358, 1979.

\bibitem[MP08]{DBLP:journals/jc/MalodP08}
Guillaume Malod and Natacha Portier.
\newblock Characterizing {V}aliant's algebraic complexity classes.
\newblock {\em J. Complex.}, 24(1):16--38, 2008.

\bibitem[MS06]{DBLP:journals/algorithmica/MuchaS06}
Marcin Mucha and Piotr Sankowski.
\newblock Maximum matchings in planar graphs via gaussian elimination.
\newblock {\em Algorithmica}, 45(1):3--20, 2006.

\bibitem[MV97]{DBLP:journals/cjtcs/MahajanV97}
Meena Mahajan and V.~Vinay.
\newblock Determinant: Combinatorics, algorithms, and complexity.
\newblock {\em Chic. J. Theor. Comput. Sci.}, 1997, 1997.

\bibitem[Sam42]{MR7497}
P.~A. Samuelson.
\newblock A method of determining explicitly the coefficients of the characteristic equation.
\newblock {\em Ann. Math. Statistics}, 13:424--429, 1942.

\bibitem[SC20]{DBLP:conf/focs/0001C20}
Shuai Shao and Jin{-}Yi Cai.
\newblock A dichotomy for real boolean {H}olant problems.
\newblock In Sandy Irani, editor, {\em 61st {IEEE} Annual Symposium on Foundations of Computer Science, {FOCS} 2020, Durham, NC, USA, November 16-19, 2020}, pages 1091--1102. {IEEE}, 2020.

\bibitem[Sch81]{DBLP:journals/siamcomp/Schonhage81}
Arnold Sch{\"{o}}nhage.
\newblock Partial and total matrix multiplication.
\newblock {\em {SIAM} J. Comput.}, 10(3):434--455, 1981.

\bibitem[Str69]{MR248973}
Volker Strassen.
\newblock Gaussian elimination is not optimal.
\newblock {\em Numer. Math.}, 13:354--356, 1969.

\bibitem[TF61]{MR136398}
H.~N.~V. Temperley and Michael~E. Fisher.
\newblock Dimer problem in statistical mechanics---an exact result.
\newblock {\em Philos. Mag. (8)}, 6:1061--1063, 1961.

\bibitem[Tod92]{toda1992classes}
Seinosuke Toda.
\newblock Classes of arithmetic circuits capturing the complexity of computing the determinant.
\newblock {\em IEICE Transactions on Information and Systems}, 75(1):116--124, 1992.

\bibitem[Val79a]{DBLP:conf/stoc/Valiant79a}
Leslie~G. Valiant.
\newblock Completeness classes in algebra.
\newblock In Michael~J. Fischer, Richard~A. DeMillo, Nancy~A. Lynch, Walter~A. Burkhard, and Alfred~V. Aho, editors, {\em Proceedings of the 11h Annual {ACM} Symposium on Theory of Computing, April 30 - May 2, 1979, Atlanta, Georgia, {USA}}, pages 249--261. {ACM}, 1979.

\bibitem[Val79b]{DBLP:journals/siamcomp/Valiant79}
Leslie~G. Valiant.
\newblock The complexity of enumeration and reliability problems.
\newblock {\em {SIAM} J. Comput.}, 8(3):410--421, 1979.

\bibitem[Val02a]{DBLP:journals/tcs/Valiant02}
Leslie~G. Valiant.
\newblock Expressiveness of matchgates.
\newblock {\em Theor. Comput. Sci.}, 289(1):457--471, 2002.

\bibitem[Val02b]{DBLP:journals/siamcomp/Valiant02}
Leslie~G. Valiant.
\newblock Quantum circuits that can be simulated classically in polynomial time.
\newblock {\em {SIAM} J. Comput.}, 31(4):1229--1254, 2002.

\bibitem[Val06]{DBLP:conf/focs/Valiant06}
Leslie~G. Valiant.
\newblock Accidental algorithms.
\newblock In {\em 47th Annual {IEEE} Symposium on Foundations of Computer Science, {FOCS} 2006, Berkeley, California, USA, October 21-24, 2006, Proceedings}, pages 509--517. {IEEE} Computer Society, 2006.

\bibitem[Val08]{DBLP:journals/siamcomp/Valiant08}
Leslie~G. Valiant.
\newblock Holographic algorithms.
\newblock {\em {SIAM} J. Comput.}, 37(5):1565--1594, 2008.

\bibitem[VW18]{WilliamsW18}
Virginia {Vassilevska Williams} and R.~Ryan Williams.
\newblock Subcubic equivalences between path, matrix, and triangle problems.
\newblock {\em J. {ACM}}, 65(5):27:1--27:38, 2018.

\bibitem[Wel93]{Welsh_1993}
Dominic Welsh.
\newblock {\em Complexity: Knots, Colourings and Countings}.
\newblock London Mathematical Society Lecture Note Series. Cambridge University Press, 1993.

\bibitem[Wil97]{DBLP:conf/soda/Wilson97}
David~Bruce Wilson.
\newblock Determinant algorithms for random planar structures.
\newblock In Michael~E. Saks, editor, {\em Proceedings of the Eighth Annual {ACM-SIAM} Symposium on Discrete Algorithms, 5-7 January 1997, New Orleans, Louisiana, {USA}}, pages 258--267. {ACM/SIAM}, 1997.

\bibitem[Wil18]{Williams18}
R.~Ryan Williams.
\newblock Faster all-pairs shortest paths via circuit complexity.
\newblock {\em {SIAM} J. Comput.}, 47(5):1965--1985, 2018.

\bibitem[Yus08]{DBLP:conf/focs/Yuster08}
Raphael Yuster.
\newblock Matrix sparsification for rank and determinant computations via nested dissection.
\newblock In {\em 49th Annual {IEEE} Symposium on Foundations of Computer Science, {FOCS} 2008, Philadelphia, PA, USA, October 25-28, 2008}, pages 137--145. {IEEE} Computer Society, 2008.

\bibitem[YZ07]{DBLP:conf/soda/YusterZ07}
Raphael Yuster and Uri Zwick.
\newblock Maximum matching in graphs with an excluded minor.
\newblock In Nikhil Bansal, Kirk Pruhs, and Clifford Stein, editors, {\em Proceedings of the Eighteenth Annual {ACM-SIAM} Symposium on Discrete Algorithms, {SODA} 2007, New Orleans, Louisiana, USA, January 7-9, 2007}, pages 108--117. {SIAM}, 2007.

\end{thebibliography}

\end{document}